\newcommand{\cR}{{\cal R}}
\newcommand{\cF}{{\cal F}}
\newcommand{\cH}{{\cal H}}
\newcommand{\cY}{{\cal Y}}
\newcommand{\cA}{{\cal A}}
\newcommand{\cZ}{{\cal Z}}
\newcommand{\kibitz}[2]{\ifnum\Comments=1{\color{#1}{#2}}\fi}
\newcommand{\E}{\mathbb{E}}
\newcommand{\R}{\mathbb{R}}
\renewcommand{\Pr}{\ensuremath{\mathrm{Pr}}}
\newcommand{\ba}{\begin{array}}
\newcommand{\ea}{\end{array}}
\newcommand{\bs}{\begin{align}\begin{split}\nonumber}
\newcommand{\bsnumber}{\begin{align}\begin{split}}
\newcommand{\es}{\end{split}\end{align}}
\newtheorem{assumption}{ASSUMPTION}
\newcommand{\ldot}[2]{\langle #1, #2 \rangle}
\newcommand{\Var}{\ensuremath{\text{Var}}}
\newcommand{\norm}[1]{\|{#1}\|} %
\def\balign#1\ealign{\begin{align}#1\end{align}}
\def\balignat#1\ealign{\begin{alignat}#1\end{alignat}}
\def\bitemize#1\eitemize{\begin{itemize}#1\end{itemize}}
\def\benumerate#1\eenumerate{\begin{enumerate}#1\end{enumerate}}
\newenvironment{talign}
 {\csname align\endcsname}
 {\endalign}
\def\balignt#1\ealignt{\begin{talign}#1\end{talign}}%
\newcommand{\cX}{\ensuremath{\mathcal{X}}}
\newcommand{\diag}{\mathtt{diag}}
\newcommand{\eye}{\mathbb{I}}
\newtheorem{theorem}{Theorem}
\newtheorem{corollary}[theorem]{Corollary}
\newtheorem{lemma}[theorem]{Lemma}
\newtheorem{definition}{Definition}
\newtheorem{example}{Example}
\newtheorem{remark}{Remark}
\title{Semi-Parametric Efficient Policy Learning with Continuous Actions}
\author{%
  Mert Demirer \\
  MIT\\
  \texttt{mdemirer@mit.edu} \\
  \And
    Vasilis Syrgkanis\\
  Microsoft Research\\
  \texttt{vasy@microsoft.com}
  \AND
  Greg Lewis \\
  Microsoft Research\\
  \texttt{glewis@microsoft.com} \\
  \And
    Victor Chernozhukov \\
  MIT\\
  \texttt{vchern@mit.edu} \\
}
\begin{document}

\maketitle

\begin{abstract}
We consider off-policy evaluation and optimization with continuous action spaces. We focus on observational data where the data collection policy is unknown and needs to be estimated. We take a semi-parametric approach where the value function takes a known parametric form in the treatment, but we are agnostic on how it depends on the observed contexts. We propose a doubly robust off-policy estimate for this setting and show that off-policy optimization based on this estimate is robust to estimation errors of the policy function or the regression model. Our results also apply if the model does not satisfy our semi-parametric form, but rather we measure regret in terms of the best projection of the true value function to this functional space. Our work extends prior approaches of policy optimization from observational data that only considered discrete actions. We provide an experimental evaluation of our method in a synthetic data example motivated by optimal personalized pricing and costly resource allocation.
\end{abstract}

\section{Introduction}

We consider off-policy evaluation and optimization with continuous action spaces from observational data, where the data collection (logging) policy is unknown. We take a semi-parametric approach where we assume that the value function takes a known parametric form in the treatment, but we are agnostic on how it depends on the observed contexts/features. In particular, we assume that:
\begin{equation}\label{eqn:main-ass}
    V(a, z) = \langle \theta_0(z), \phi(a, z) \rangle
\end{equation}
for some known feature functions $\phi$ but unknown functions $\theta_0$. We assume that we are given a set of $n$ observational data points $(x_{1},..., x_{n}) $ that consist of i.i.d copies of the random vector $x=(y, a, z)\in \cY\times \cA\times \cZ$, such that $\E[y\mid a, z]=V(a, z)$.\footnote{In most of the paper, we can allow for the case where $z$ is endogenous, in the sense that $\E[y \mid a, z] = V(a, z) + f_0(z)$. In other words, the noise in the random variable $y$ can be potentially correlated with $z$. However, we assume that conditional on $z$, there is no remaining endogeneity in the choice of the action in our data. 
The latter is typically referred to as conditional ignorability/exogeneity \cite{imbens2015causal}.} 

Our goal is to estimate a policy $\hat{\pi}: \cZ\rightarrow \cA$ from a space of policies $\Pi$ that achieves good regret:
\begin{equation}
    \sup_{\pi\in \Pi} \E[V(\pi(z), z)] - \E[V(\hat{\pi}(z), z)] \leq R(\Pi, n)
\end{equation}
for some regret rate that depends on the policy space $\Pi$ and the sample size $n$. 

The semi-parametric value assumption allows us to formulate a doubly robust estimate $V_{DR}$ of the value function, from the observational data, which depends on first stage regression estimates of the coefficients $\theta_0(z)$ and the conditional covariance of the features $\Sigma_0(z)=\E[\phi(a,z)\phi(a,z)^T\mid z]$. The latter is the analogue of the propensity function when actions are discrete. Our estimate is doubly robust in that it is unbiased if either $\theta_0$ or $\Sigma_0$ is correct. Then we optimize this estimate:
\begin{equation}
    \hat{\pi} = \sup_{\pi \in \Pi} V_{DR}(\pi) 
\end{equation}

\paragraph{Main contributions.} We show that the double robustness property implies that our objective function satisfies a Neyman orthogonality criterion, which in turn implies that our regret rates depend only in a second order manner on the estimation errors on the first stage regression estimates of the functions $\theta_{0}, \Sigma_0$. Moreover, we prove a regret rate whose leading term depends on the variance of the difference of our estimated value between any two policy values within a ``small regret-slice'' and on the entropy integral of the policy space. We achieve this with a computationally efficient variant of the empirical risk minimization (ERM) algorithm (of independent interest) that uses a validation set to construct a preliminary policy and use it to regularize the policy computed on the training set. Hence, we manage to achieve variance-based regret bounds without the need for variance or moment penalization \cite{maurer2009empirical,swaminathan2015counterfactual,foster2019orthogonal} used in prior work and which can render a computationally tractable policy learning problem, non-convex. We also show that the asymptotic variance of our off-policy estimate (which governs the scale of the leading regret term) is asymptotic minimax optimal, in the sense that it achieves the semi-parametric efficiency lower bound.


\paragraph{Robustness to mis-specification.} Notably, our approach provides meaningful guarantees even when our semi-parametric value
function assumption is violated. Suppose that the true value function does not take the form of Equation~\eqref{eqn:main-ass}, but rather takes some other form $V_0(a, z)$. Then one can consider the projection of this value function onto the forms of Equation~\eqref{eqn:main-ass}, as:
\begin{equation} \label{eq:proj}
    \theta_{p}(z) = \arg\inf_{\theta} \E\left[ (V_0(a,z) - \langle \theta(z), \phi(a, z) \rangle)^2 \mid z\right]
\end{equation}
where the expectation is taken over the distribution of observed data. 
Then our approach takes the interpretation of achieving good regret bounds with respect to this best linear semi-parametric approximation. This is an alternative to the kernel smoothing approximation proposed by \cite{swaminathan2015counterfactual} in contextual bandit setting, as a regret target, and related to \cite{kallus2018policy}. If there is some rough domain knowledge on the form of how the action affects the reward, then our semi-parametric approximate target should achieve better performance when the dimension of the action space is large, as the bias of kernel methods will typically incur an exponential in the dimension bias.

\paragraph{Double robustness.} In cases where the collection policy is known, our doubly robust approach can be used for variance reduction via fitting first stage regression estimates to the policy value, whilst maintaining unbiasedness. Thus we can apply our approach to improve regret in the counterfactual risk minimization framework \cite{swaminathan2015counterfactual}, \cite{kallus2018policy} and as a variance reduction method in contextual bandit algorithms with continuous actions \cite{swaminathan2015counterfactual}.

\paragraph{Related Literature.} Our work builds on the recent work at the intersection of semi-parametric inference and policy learning from observational data. The important work of \cite{athey2017efficient} analyzes the binary treatments and infinitesimal nudges to continuous treatments. They also take a doubly robust approach so as to obtain regret bounds whose leading term depends on the semi-parametric efficient variance and the entropy integral and which is robust to first stage estimation errors. The problem we study in this paper is different in that we consider optimizing over continuous action spaces, rather than infinitesimal nudges, under semi-parametric functional form. This assumption is without loss of generality if treatment is binary or multi-valued. Hence, our results are a generalization of binary treatments to arbitrary continuous actions spaces, subject to our semi-parametric value assumption. In fact we show formally in the Appendix how one can recover the result of \cite{athey2017efficient} for the binary setting, from our main regret bound. In turn our work builds on a long line of work on policy learning and counterfactual risk minimization \cite{qian2011performance,zhao2012estimating,zhou2017residual,athey2017efficient,kitagawa2018should,zhou2018offline,beygelzimer2009offset,dudik2011doubly,swaminathan2015counterfactual,kallus2018policy,krishnamurthy}. Notably, the work of \cite{zhou2018offline} extends the work of \cite{athey2017efficient} to many discrete actions, but only proves a second moment based regret bound, which can be much larger than the variance. Our setting also subsumes the setting of many discrete actions and hence our regularized ERM offers an improvement over the rates in \cite{zhou2018offline}. \cite{foster2019orthogonal} formulates a general framework of statistical learning with a nuisance component. Our method falls into this framework and we build upon some of the results in \cite{foster2019orthogonal}. However, for the case of policy learning the implications of \cite{foster2019orthogonal} provide a variance based regret only when invoking second moment penalization, which can be intractable. We side-step this need and provide a computationally efficient alternative. Finally, most of the work on policy learning in machine learning assumes that the current policy (equiv. $\Sigma_0(z)$) is known. Hence, double robustness is used mostly as a variance reduction technique. Even for this literature, as we discuss above, our method can be seen an alternative of recent work on policy learning with continuous actions \cite{kallus2018policy,krishnamurthy} that makes use of non-parametric kernel methods.

Our work also connects to the semi-parametric estimation literature in econometrics and statistics. Our model is an extension of the partially linear model which has been extensively studied in the econometrics \cite{engle1986semiparametric, robinson1988root}. By considering context-specific coefficients (random coefficients) and modeling a value function that is non-linear in treatment, we substantially extend the partially linear model. \cite{wooldridge2004estimating, graham2018semiparametrically} studied a special case of our model where output is linearly dependent on treatment given context, with the aim of estimating the average treatment effect. \cite{graham2018semiparametrically} constructed the doubly robust estimator and showed its semi-parametric efficiency under the linear-in-treatment assumption. We extend their results to a more general functional form and use the double-robustness property and semi-parametric efficiency for policy evaluation and optimization rather than treatment effect estimation.
Our work is also connected to the recent and rapidly growing literature on the orthogonal/locally robust/debiased estimation literature \cite{chernozhukov2018double, chernozhukov2016locally, van2011targeted}.

\section{Orthogonal Off-Policy Evaluation and Optimization} \label{sec:ortho}

Let $\hat{\theta}$ be a first stage estimate of $\theta_0(z)$, which can be obtained by minimizing the square loss:
\begin{equation} \label{direct}
    \hat{\theta} = \arg\inf_{\theta\in \Theta} \E_n\left[ \left(y - \langle \theta(z), \phi(a, z) \rangle\right)^2 \right]
\end{equation}
where $\Theta$ is an appropriate parameter space for the parameters $\theta(z)$. Let $\Sigma_0(z)$ denote the conditional covariance matrix:
\begin{equation*}
    \Sigma_0(z) = \E[\phi(a, z)\, \phi(a, z)^T \mid z]
\end{equation*}
This is the analogue of the propensity model in discrete treatment settings. An estimate $\hat{\Sigma}(z)$ can be obtained by running a multi-task regression problem for each entry to the matrix, i.e.:
\begin{equation}  \label{covar}
    \hat{\Sigma}_{ij} = \arg\inf_{\Sigma_{ij}\in {\cal S}_{ij}} \E\left[ (\phi_i(a, z)\, \phi_j(a, z) - \Sigma_{ij}(z))^2 \right]
\end{equation}
where ${\cal S}_{ij}$ is some appropriate hypothesis space for these regressions.
Finally, the doubly robust estimate of the off-policy value takes the form:
\begin{equation}\label{eqn:dr-value}
    V_{DR}(\pi) = \E_n\left[v_{DR}(y, a, z; \pi)\right]
\end{equation}
where:
\begin{align} \label{eq: orthogonal}
    v_{DR}(y, a, z; \pi) =~& \ldot{\theta_{DR}(y, a, z)}{\phi(\pi(z), z)}\\
    \theta_{DR}(y, a, z) =~& \hat{\theta}(z) + \hat{\Sigma}(z)^{-1}\, \phi(a, z)\, (y - \langle \hat{\theta}(z), \phi(a, z)\rangle)
\end{align}
The quantity $\theta_{DR}(y, a, z)$ can be viewed as an estimate of $\theta_0(z)$, based on a single observation. In fact, if the matrix $\hat{\Sigma}$ was equal to $\Sigma_0$, then one can see that $\theta_{DR}(y, a, z)$ is an unbiased estimate of $\theta_0(z)$. Our estimate $v_{DR}$ also satisfies a doubly robust property, i.e. it is correct if either $\hat{\theta}$ is unbiased or $\hat{\Sigma}^{-1}$ is unbiased (see  Appendix~\ref{app:dr} for a formal statement). Finally, we will denote with $\theta_{DR}^0(y, a, z)$ the version of $\theta_{DR}$, where the nuisance quantities $\theta$ and $\Sigma$ are replaced by their true values, and correspondingly define $v_{DR}^0(y, a, z; \pi)$.
We perform policy optimization based on this doubly robust estimate:
\begin{equation} \label{ERM}
\hat{\pi} = \arg\max_{\pi \in \Pi} V_{DR}(\pi)
\end{equation}
Moreover, we let $\pi_*^0$ be the optimal policy:
\begin{equation}
    \pi_*^0 = \arg\max_{\pi\in \Pi} V(\pi)
\end{equation}

\begin{remark}[Multi-Action Policy Learning]
A special case of our setup is the setting where the number of actions is finitely many. This can be encoded as $a \in \{e_1, \ldots, e_n\}$ and $\phi(a, z)=a$. In that case, observe that the covariance matrix becomes a diagonal matrix: $\Sigma_0(z) = \diag(p_1(z), \ldots, p_n(z))$, with 
$p_i(z)=\Pr[a=e_i \mid z]$. In this case, we simply recover the standard doubly robust estimate that combines the direct regression part with the inverse propensity weights part, i.e.:
\begin{equation*}
    \theta_{DR, i}(y, a, z) = \hat{\theta}_i(z) + \frac{1}{\hat{p}_i(z)}\, 1\{a=e_i\}\, (y - \hat{\theta}_i(z))
\end{equation*}
Thus our estimator is an extension of the doubly robust estimate from discrete to continuous actions.
\end{remark}

\begin{remark}[Finitely Many Possible Actions: Linear Contextual Bandits]
Another interesting special case of our approach is a generalization of the linear contextual bandit setting. In particular, suppose that there is only a finite (albeit potentially large) set of $N>p$  possible actions $A=\{a_1,\ldots, a_N\}$ and $a_i\in \R^p$. However, unlike the multi-action setting, where these actions are the orthonormal basis vectors, in this setting, each action $a_i\in A$, maps to a feature vector $\phi_i(z):=\phi(a_i, z)$. Then the reward $y$ that we observe satisfies $\E[y\mid z, a]= \ldot{\theta(z)}{\phi(a, z)}$. This is a generalization of the linear contextual bandit setting, in which the coefficient vector $\theta(z)$ is a constant parameter $\theta$ as opposed to varying with $z$.
In this case observe that: $\Sigma_0(z)=\sum_{i=1}^N p_i(z)\, \phi_i(z)\, \phi_i(z)^T= UDU^T $, i.e. it is the sum of $N$ rank one matrices where $D = \diag(p_1(z), \ldots, p_n(z))$, $p_i(z)=\Pr[a=a_i\mid z]$ and $U = [\phi_i(z), \dots,\phi_N(z) ]$ The doubly robust estimate of the parameter takes the form:
%
%
\begin{equation*}
    \theta_{DR}(y, a, z) = \hat{\theta}(z) +  (UDU^T)^{-1} \, \phi(a, z)\, (y - \langle \hat{\theta}(z), \phi(a, z)\rangle)
\end{equation*}
This approach leverages the functional form assumption to get an estimate that avoids a large variance that depends on the number of actions $N$ but rather mostly depends on the number of parameters $p$. This is achieved by sharing reward information across actions.
\end{remark}

\begin{remark}[Linear-in-Treatment Value]
Consider the case where the value is linear in the action $\phi(a, z)=a\in \R^p$.
In this case observe that: $\Sigma_0(z)=\Var(a\mid z)$. For instance, suppose that we assume that experimentation is independent across actions in the observed data. Then $\Sigma_0(z)=\diag(\sigma_1^2(z), \ldots, \sigma_p^2(z))$, where $\sigma_i^2=\Var(a_i\mid z)$. Then the doubly robust estimate of the parameter takes the form:
\begin{equation}
    \theta_{DR, i}(y, a, z) = \hat{\theta}_i(z) +  \frac{a_i}{\hat{\sigma}_i^2(z)}\, (y - \ldot{\hat{\theta}(z)}{a})
\end{equation}
\end{remark}

\section{Theoretical Analysis}

Our main regret bounds are derived for a slight variation of the ERM algorithm that we presented in the preliminary section. In particular, we crucially need to augment the ERM algorithm with a ``validation'' step, where we split our data into a training and validation step, and we restrict attention to policies that achieve small regret on the training data, while still maintaining small regret on the validation set. This extra modification enabled us to prove variance based regret bounds and is reminiscent of standard approaches in machine learning, like $k$-fold cross-validation and early stopping, hence could be of independent interest.

\begin{algorithm}[H]
The inputs are given by the sample of data $ S =(x_1, \dots, x_n)$, which we randomly split in two parts $S_1, S_2$.  Moreover, we randomly split $S_2$ into validation and training samples $S_2^v$ and  $S_2^t$.

Estimate the nuisance functions $\hat \theta (z)$ and  $\hat \Sigma (z)$ using Equations (\ref{direct}) and (\ref{covar}) on $S_1$.

Use the output of Step 2 to construct the doubly robust moment in Equation (\ref{eqn:dr-value}) on $S_2^v$. Run ERM given in Equation (\ref{ERM}) over policy space $\Pi_1 $ on $S_2^v$ to learn a policy function $\pi_1$.

Use the output of Step 3 to construct a function class $\Pi_2 $ defined as 
\begin{equation*}
    \Pi_2 = \{\pi \in \Pi : \E_{S^v}[v_{DR}(y, a, z; \pi_1) - v_{DR}(y, a, z; \pi)] \leq \mu_n\}
\end{equation*}
for some  $\mu_n$ and $\E_{S^v} $ denotes the empirical expectation over the validation sample.

Use the output of Step 1 to construct the doubly robust moment in Equation (\ref{eqn:dr-value}) on $S_2^t$. Run a constrained ERM on $S_2^t$ over $\Pi_2$.
\caption{Out-of-Sample Regularized ERM with Nuisance Estimates}\label{alg:implementation} 
\end{algorithm}
We note that we present our theoretical results for the simpler case where the nuisance estimates are trained on a separate split of the data. However, our results qualitatively extend to the case where we use the cross-fitting idea of \cite{chernozhukov2018double} (i.e. train a model on one half and predict on the other and vice versa).

\paragraph{Regret bound.} To show the properties of this algorithm, we first show that the regret of the doubly robust algorithm is impacted in a second order manner by the errors in the first stage estimates. We will also make the following preliminary definitions. For any function $f$ we denote with $\|f\|_2 = \sqrt{\E[f(x)^2]}$, the standard $L^2$ norm and with $\|f\|_{2,n}=\sqrt{\E_n[f(x)^2]}$ its empirical analogue. Furthermore, we define the empirical entropy of a function class $H_2(\epsilon, \cF, n)$ as the largest value, over the choice of $n$ samples, of the logarithm of the size of the smallest empirical $\epsilon$-cover of $\cF$ on the samples with respect to the $\|\cdot\|_{2,n}$ norm. Finally, we consider the empirical entropy integral:
\begin{equation}
    \kappa(r, \cF) = \inf_{\alpha\geq 0}\left\{ 4\alpha + 10\int_{\alpha}^{r} \sqrt{\frac{\cH_2(\epsilon, \cF, n)}{n}} d\epsilon\right\},
\end{equation}
Our statistical learning problem corresponds to learning over the function space:
\begin{equation}
    \cF_{\Pi} = \{ v_{DR}(\cdot;\pi): \pi\in \Pi\}
\end{equation}
where the data is $x=(y,a,z)$. We will also make a very benign assumption on the entropy integral:
\begin{assumption}
The function class $\cF_{\Pi}$ satisfies that for any constant $r$, $\kappa(r, \cF)\rightarrow 0$ as $n\rightarrow \infty$.
\end{assumption}

\begin{theorem}[Variance-Based Oracle Policy Regret]\label{thm:main-regret}
Suppose that the nuisance estimates satisfy that their mean squared error is upper bounded w.p. $1-\delta/2$ by $h_{n,\delta}^2$, i.e. w.p. $1-\delta/2$ over the randomness of the nuisance sample:
\begin{equation}
    \max\left\{\E[(\hat{\theta}(z)-\theta_0(z))^2], \E[\|\hat{\Sigma}(z)-\Sigma_0(z)\|_{Fro}^2]\right\} \leq h_{n,\delta}^2
\end{equation}
Let $r=\sup_{\pi\in \Pi} \sqrt{\E[v_{DR}(z;\pi)^2]}$ and $\mu_n = \Theta\left( \kappa(r, \cF_{\Pi}) + r \sqrt{\frac{\log(1/\delta)}{n}}\right)$. Moreover, let 
\begin{equation}
    \Pi_*(\epsilon)=\{\pi\in \Pi: V(\pi_*^0) - V(\pi)\leq \epsilon\},
\end{equation}
denote an $\epsilon$-regret slice of the policy space. 
Let $\epsilon_n=O(\mu_n + h_{n,\delta}^2)$ and
\begin{equation}
    V_2^0 = \sup_{\pi,\pi'\in \Pi_*(\epsilon_n)} \Var(v_{DR}^0(x; \pi)-v_{DR}^0(x; \pi')) 
\end{equation}
denote the variance of the difference between any two policies in an $\epsilon_n$-regret slice, evaluated at the true nuisance quantities. Then the policy $\pi_2$ returned by the out-of-sample regularized ERM, satisfies w.p. $1-\delta$ over the randomness of $S$:
\begin{align}
    V(\pi_*^0) - V(\pi_2) =~& O\left( \kappa(\sqrt{V_2^0}, \cF_{\Pi}) +  \sqrt{\frac{V_2^0\, \log(1/\delta)}{n}} + h_{n,\delta}^2\right)
\end{align}
Expected regret is $O\left(\kappa(\sqrt{V_2^0}, \cF_{\Pi}) +  \sqrt{\frac{V_2^0}{n}} + h_{n}^2\right)$, with $h_n^2$ is expected MSE of nuisance functions.
\end{theorem}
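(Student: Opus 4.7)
The plan is to decompose the excess regret $V(\pi_*^0) - V(\pi_2)$ into a first-order stochastic term (evaluated at the true nuisances) plus a second-order nuisance bias term, and then obtain a variance-based rate on the stochastic term by exploiting the localization induced by the validation step. As a starting point, I would establish a Neyman-orthogonality lemma for $v_{DR}$: for any fixed $\pi$,
\begin{equation*}
\bigl|\E[v_{DR}(x;\pi)] - V(\pi)\bigr| \lesssim \|\hat\theta - \theta_0\|_2 \cdot \|\hat\Sigma - \Sigma_0\|_2 = O(h_{n,\delta}^2)
\end{equation*}
since the cross-term in the DR score has zero conditional expectation when either $\hat\theta=\theta_0$ or $\hat\Sigma=\Sigma_0$. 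A uniform version (over $\pi\in\Pi$) reduces the analysis of regret in terms of $v_{DR}$ to regret in terms of $v_{DR}^0$ up to an additive $O(h_{n,\delta}^2)$.

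Next I would analyze Step~3. A standard global uniform-convergence argument over $\mathcal{F}_\Pi$, using Rademacher complexity bounded by $\kappa(r,\mathcal{F}_\Pi)$ together with a Hoeffding/Bernstein tail, shows that w.p.\ $1-\delta/4$ the ERM output $\pi_1$ satisfies
\begin{equation*}
V(\pi_*^0) - V(\pi_1) \le O\!\left(\kappa(r,\mathcal{F}_\Pi) + r\sqrt{\tfrac{\log(1/\delta)}{n}} + h_{n,\delta}^2\right) = O(\mu_n + h_{n,\delta}^2) = O(\varepsilon_n).
\end{equation*}
Applying the same uniform bound on $S_2^v$ to $\pi_*^0$ against $\pi_1$ shows that $\pi_*^0 \in \Pi_2$ w.h.p.\ provided $\mu_n$ is chosen with the stated constant, because the empirical regret of $\pi_*^0$ relative to $\pi_1$ on $S_2^v$ is at most the global uniform-convergence slack. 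Symmetrically, every $\pi\in \Pi_2$ satisfies $V(\pi_*^0)-V(\pi) \le O(\varepsilon_n)$, i.e.\ $\Pi_2 \subseteq \Pi_*(\varepsilon_n)$. This is the crucial localization: we have trapped the second-stage search inside a small-regret slice without any explicit variance penalty.

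Step~5 of the algorithm is then a constrained ERM on $S_2^t$ over the shrunken class $\Pi_2$, with $\pi_*^0\in\Pi_2$. Here I would invoke a localized uniform-deviation inequality of Talagrand--Bousquet--Bartlett type applied to the centered class $\{v_{DR}^0(\cdot;\pi_*^0) - v_{DR}^0(\cdot;\pi) : \pi\in\Pi_2\}$. Since $\Pi_2 \subseteq \Pi_*(\varepsilon_n)$, every element of this class has second moment bounded by $V_2^0$, and the empirical Rademacher complexity of the class is controlled by $\kappa(\sqrt{V_2^0},\mathcal{F}_\Pi)$. Combining this with the Neyman-orthogonality reduction to $v_{DR}^0$ yields, with probability $1-\delta$,
\begin{equation*}
V(\pi_*^0) - V(\pi_2) = O\!\left(\kappa(\sqrt{V_2^0},\mathcal{F}_\Pi) + \sqrt{\tfrac{V_2^0 \log(1/\delta)}{n}} + h_{n,\delta}^2\right).
\end{equation*}
The expected-regret statement then follows by integrating this tail bound and using that regret is bounded (so the $\delta$-dependent terms integrate to constants).

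The main technical obstacle is the localization step: I need to verify simultaneously that (i) $\pi_*^0$ survives the validation filter so that it lies in $\Pi_2$, and (ii) every surviving policy has small population regret, and to do both with the same scale $\mu_n$ chosen from the \emph{global} Rademacher complexity. These two directions are symmetric and fall out of one uniform concentration bound on $S_2^v$, but the constants have to be set carefully so that the $\mu_n$-threshold is neither too loose (losing the variance localization) nor too tight (excluding $\pi_*^0$). Once this is set up correctly, the localized Talagrand inequality on $S_2^t$ delivers the $\sqrt{V_2^0/n}$ rate essentially automatically, and the double robustness absorbs nuisance error into the $h_{n,\delta}^2$ remainder.
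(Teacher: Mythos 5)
Your proposal is correct and follows essentially the same route as the paper: Neyman orthogonality to absorb the nuisance error at second order, the validation filter to trap the second-stage search in a small-regret slice containing $\pi_*^0$, and a localized concentration bound over that slice to obtain the variance-based rate (the paper packages the concentration steps via Theorem~4 and Lemma~4 of Foster--Syrgkanis rather than invoking Talagrand--Bousquet--Bartlett directly). The one detail you gloss over is that the second-stage empirical process is naturally localized by the variance of $v_{DR}$ differences at the \emph{estimated} nuisances over a slice defined by the DR objective, and the paper needs a separate Lipschitzness-in-nuisance argument (Lemma~\ref{lem:variance-cont}) to convert this to $V_2^0$ at the true nuisances over the true-regret slice, incurring only a lower-order $O(h_{n,\delta}/\sqrt{n})$ correction.
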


We provide a proof of this Theorem in Appendix~\ref{app:main-theorem}. The regret result contains two main contributions: 1) first the impact of the nuisance estimation error is of second order (i.e. $h_{n,\delta}^2$ instead of $h_{n,\delta}$), 2) the leading regret term depends on the variance of small-regret policy differences and the entropy integral of the policy space. The first property stems from the Neyman orthogonality property of the doubly robust estimate of the policy. The second property stems from the out-of-sample regularization step that we added to the ERM algorithm. Typically, we will have $h_{n,\delta}^2=o(1/\sqrt{n})$ and thereby this term is of lower order than the leading term. Moreover, for many policy spaces $\kappa(0, \cF_{\Pi})=0$, in which case we see that if the setting satisfies a ``margin'' condition (i.e. the best policy is better by a constant $\Delta$ margin), then eventually the variance of small regret policies is $0$, since it only contains the best policy. In that case, our bound leads to fast rates of $\log(n)/n$ as opposed to $1/\sqrt{n}$, since the leading term vanishes (similar to the $\log(n)/n$ achieved in bandit problems with such a margin condition). 

Dependence on the quantity $V_2^0$ is quite intuitive: if two policies have almost equivalent regret up to a $\mu_n$ rate, then it will be very easy to be mislead among them if one has much higher variance than the other. For some classes of problems, the above also implies a regret rate that only depends on the variance of the optimal policy (e.g. when all policies with low regret have a variance that is not much larger than the variance of the optimal policy. In Appendix~\ref{app:variogram} we show that the latter is always the case for the setting of binary treatment studied in \cite{athey2017efficient} and therefore applying our main result, we recover exactly their bound for binary treatments.


\paragraph{Semi-parametric efficient variance.} Our regret bound depends on the variance of our doubly robust estimate of the value function. One then wonders if there are other estimates of the value function that could achieve better variance than $V_{DR}(\pi)$. However, we show that at least asymptotically and without further assumptions on the functions $\theta_{0}(z)$ and $\Sigma_{0}(z)$, this cannot be the case. In particular, we show that our estimator achieves what is known as the semi-parametric efficient variance limit for our setting. More importantly for our regret result, this is also true for the semiparametric efficient variance of the policy differences. This is the case in our main setup; where the model is mis-specified and only a projection of the true value; and even if we assume that our model is correct, but make the extra assumption of homoskedasticity, i.e., the conditional variance of residuals of outcomes $y$ do not depend on $(a, z)$.

\begin{theorem}[Semi-parametric Efficiency]\label{thm:semi-param-efficiency}
If the model is mis-specified, i.e, $V_{0}(a,z)\neq V(a,z)$ the asymptotic variance of $V_{DR}(\pi)$ is equal to the semi-parametric efficiency bound for the policy value $ \langle \theta_p(z), \pi(z) \rangle $ defined in Equation (\ref{eq:proj}). If the model is correctly specified, $V_{DR}(\pi)$ is semi-parametrically efficient under homoskedasticity, i.e. $\E[ (y - V(a,z))^2 \mid a,z] = \E[ (y - V(a,z))^2]$.
\end{theorem}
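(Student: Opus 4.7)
The plan is to compare the asymptotic variance of $V_{DR}(\pi)$ with the semiparametric efficiency bound for the appropriate target functional in each of the two cases. I would proceed in three steps: first, derive the influence function of $V_{DR}(\pi)$; second, compute the efficient influence function of the appropriate target parameter; third, show that the two coincide, which by standard semiparametric theory certifies asymptotic efficiency.

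For step one, I would linearize $V_{DR}(\pi)-\psi(P)$ around the true nuisances $(\theta_0,\Sigma_0)$. The Neyman orthogonality of the doubly robust moment, which follows by directly computing the Gateaux derivatives of $\E[v_{DR}(y,a,z;\pi)]$ with respect to $\theta$ and $\Sigma$ at the truth and observing that both vanish, ensures the first-stage errors contribute only at second order. Under the standard rate condition $h_{n,\delta}^2=o(n^{-1/2})$, the asymptotic variance of $\sqrt{n}(V_{DR}(\pi)-\psi(P))$ is therefore $\Var(v_{DR}^0(x;\pi))$, where
\begin{equation*}
v_{DR}^0(x;\pi) = \ldot{\theta_0(z) + \Sigma_0(z)^{-1}\phi(a,z)(y - \ldot{\theta_0(z)}{\phi(a,z)})}{\phi(\pi(z), z)}.
\end{equation*}

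Step two is the main obstacle: I would derive the efficient influence function for $\psi(P)=\E_P[\ldot{\theta_p(z)}{\phi(\pi(z),z)}]$ by a tangent-space calculation. Along a regular parametric submodel $P_t$ with score $s(x)=s(y\mid a,z)+s(a\mid z)+s(z)$, I would differentiate the conditional first-order condition $\Sigma_0(z)\theta_p(z)=\E[\phi(a,z)\,y\mid z]$ to express $\partial_t\theta_{p,t}(z)|_{t=0}$ as an explicit linear functional of $s$, then substitute into the pathwise derivative of $\psi(P_t)$ and collect the terms attributable to each score component. I expect to recover exactly $v_{DR}^0(x;\pi)-\psi(P)$ as the Riesz representer, which extends the linear-in-treatment calculation of Graham (2018) to our general feature map $\phi$ and handles the mis-specified case. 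For the correctly specified case, $V(a,z)=\ldot{\theta_0(z)}{\phi(a,z)}$ forces $\theta_p=\theta_0$ so the same candidate influence function is the natural target; the homoskedasticity hypothesis is needed precisely because, without it, reweighting the residual by the inverse of $\E[(y-V(a,z))^2\mid a,z]$ would yield a strictly smaller variance, but under $\E[(y-V(a,z))^2\mid a,z]=\sigma^2$ this reweighting reduces to a constant rescaling that leaves the variance unchanged. Repeating the tangent-space argument with scores restricted to respect the homoskedasticity constraint (orthogonal to heteroskedastic perturbations of the residual variance) then confirms that $v_{DR}^0(x;\pi)-\psi(P)$ is again the efficient influence function. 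Step three is then immediate: equality of the influence functions in each regime delivers both claims of the theorem.
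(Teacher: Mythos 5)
Your outline for the mis-specified case is essentially the paper's own argument: the paper also works with a regular parametric submodel, differentiates the conditional first-order condition $\E[(y-\langle\theta_p(z),\phi(a,z)\rangle)\phi(a,z)\mid z]=0$ to express $\partial_\eta\theta_p(z;\eta_0)$ in terms of the score of $f(y,a\mid z)$, substitutes into the pathwise derivative of $\alpha(\eta)=\int\langle\theta_p(z;\eta),\phi(\pi(z),z)\rangle f(z;\eta)\,dz$, and identifies the Riesz representer as $v^0_{DR}(x;\pi)-\alpha$. One step you leave implicit but should state: after exhibiting the representer you must verify it lies in the tangent set (here $\{s(y,a\mid z)+s(z):\E[s(y,a\mid z)\mid z]=0,\ \E[s(z)]=0\}$, which is unrestricted because $\theta_p$ is defined by projection), so that the efficient influence function is the representer itself rather than a strictly smaller projection of it; the paper does this explicitly by decomposing $d$ into a conditionally mean-zero piece and a mean-zero function of $z$. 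Where you genuinely diverge is the correctly specified case. The paper does not redo the tangent-space calculation there; it invokes Chamberlain's (1992) efficiency bound for the conditional moment restriction model, $\Var(\langle\theta_0(z),\phi(\pi(z),z)\rangle)+\E[\phi(\pi(z),z)^T\E[\phi(a,z)\E[\epsilon^2\mid a,z]^{-1}\phi(a,z)^T\mid z]^{-1}\phi(\pi(z),z)]$, and checks by direct computation that it coincides with $\Var(v^0_{DR})$ when $\E[\epsilon^2\mid a,z]$ is constant. Your plan to ``restrict scores to respect the homoskedasticity constraint'' is not quite the right model: the restriction that shrinks the tangent space under correct specification is the conditional mean restriction $\E[y\mid a,z]=\langle\theta_0(z),\phi(a,z)\rangle$; homoskedasticity enters only as a property of the distribution at which the (unchanged) bound is evaluated. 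Imposing homoskedasticity as an additional model restriction would in principle shrink the tangent set further and could lower the bound, so you would be proving efficiency in a different model; you would need a separate orthogonality argument (as in classical regression, where knowledge of the variance function does not help estimate the mean parameters) to recover the paper's statement. Citing or reproducing Chamberlain's computation, as the paper does, avoids this issue.
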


We provide a proof for the value function, but this result also extends to the difference of values. We conclude the section by providing concrete examples of rates for policy classes of interest.

\begin{example}[VC Policies]
As a concrete example, consider the case when the class $\cF_{\Pi}$ is a VC-subgraph class of VC dimension $d$ (e.g. the policy space has small VC-dimension or pseudo-dimension), and let $S=\E[\sup_{\pi} v_{DR}(x;\pi)^2]$. Then Theorem 2.6.7 of \cite{VanDerVaartWe96} shows that: $\cH_2(\epsilon, \cF_{\Pi}, n)=O( d(1+\log(S/\epsilon)))$ (see also discussion in Appendix~\ref{app:variogram}). This implies that
\begin{align*}
\kappa(r, \cF_{\Pi}) = O\left(\int_0^{r} \sqrt{d(1+\log(S/\epsilon))}d\epsilon\right) = O\left(r \sqrt{d}\sqrt{1+\log(S/r)}\right).
\end{align*}
Hence, we can conclude that regret is $O\left(  \sqrt{V_2^0(1+\log(S/V_2^0))}\sqrt{\frac{d}{n}} +  \sqrt{\frac{V_2^0\, \log(1/\delta)}{n}} + h_{n,\delta}^2\right)$.
For the case of binary action policies (as we discuss in Appendix~\ref{app:variogram}) this result recovers the result of \cite{athey2017efficient} for binary treatments up to constants and extends it to arbitrary action spaces and VC-subgraph policies.
\end{example}

\begin{example}[High-Dimensional Index Policies]
As an example, we consider the class of policies, characterized by a constant number of $\ell_1$ or $\ell_0$-bounded linear indices:
\begin{align}
    \Pi_1 = \{z \rightarrow \Gamma(\ldot{\beta_1}{z}, \ldots, \ldot{\beta_d}{z}): \beta_i \in \R^p, \|\beta_i\|_1 \leq s\}
\end{align}
where $\Gamma: \R^d \rightarrow \R^m$ is a fixed $L$-Lipschitz function of the indices, with $d, m$ constants, while $p>>n$ (and similarly for $\Pi_0$, where use $\|\beta_i\|_0\leq s$). Assuming $v_{DR}(y, a, z;\pi)$ is a Lipschitz function of $\pi(z)$ and since $\Gamma$ is a Lipschitz function of $\ldot{\beta}{z}$, we have by a standard multi-variate Lipschitz contraction argument (and since $d$, $m$ are constants), that the entropy of $\cF_{\Pi}$ is of the same order as the maximum entropy of each of the linear index spaces: $B_1:=\{z \rightarrow \ldot{\beta_i}{z}: \beta \in \R^p, \|\beta_i\|_1 \leq s\}$. Moreover, by known covering arguments (see e.g. \cite{zhang2002covering}, Theorem~3) that if $\|z\|_{\infty}\leq 1$, then:
$\cH_2(\epsilon, B, n) = O\left(\frac{s^{2}\log(d)}{\epsilon^{2}}\right)$.
Thus we get $\kappa(r, \cF_\Pi)=O\left( s\, \log(n)\, \sqrt{\frac{\log(d)}{n}} + \frac{r}{n}\right)$,
which leads to regret $O\left( s\, \log(n)\, \sqrt{\frac{\log(d)}{n}} +  \sqrt{\frac{V_2^0\, \log(1/\delta)}{n}} + h_{n,\delta}^2\right)$.
In this setting, we observe that the policy space is too large for the variance to drive the asymptotic regret. There is a leading term that remains even if the worse-case variance of policies in a small-regret slice is $0$. Intuitively this stems from the high-dimensionality of the linear indices, which introduces an extra dimension of error, namely bias due to regularization. On the contrary, for exactly sparse policies $B_0:=\{z \rightarrow \ldot{\beta_i}{z}: \beta \in \R^p, \|\beta_i\|_0 \leq s\}$, we have that since for any possible support the entropy at scale $\epsilon$ is at most $O(s\log(1/\epsilon))$, we can take a union over all ${ p\choose s}\leq{}\left(\frac{ep}{s}\right)^{s}$ possible sparse supports, which implies $\cH_2(\epsilon,\cF_{\Pi},n)= O(s\left(\log(d/s) + \log(1/\epsilon)\right)$. Thus $\kappa(r, \cF_{\Pi}) = O\left(r \sqrt{\log(1/r)} \sqrt{\frac{s\log(d/s)}{n}}\right)$,
leading to policy regret similar to the VC classes: $O\left(\sqrt{V_2^0\log(1/V_2^0)} \sqrt{\frac{s\log(d/s)}{n}} +  \sqrt{\frac{V_2^0\, \log(1/\delta)}{n}} + h_{n,\delta}^2\right)$.
\end{example}

\begin{remark}[Instrumental Variable Estimation]
Our main regret results extend to the instrumental variables settings where  treatments are endogenous but we have a vector of instrumental variables $w$ satisfying
\begin{align*}
\E[ (y - \langle {\theta}_{0}(z), \phi (a, z)\rangle)w \mid z ] = 0,
\end{align*}
and $\Sigma_0^{I}(z) = \E[ w \phi(a, z)^T \mid z] $ is invertible. Then we can use the following doubly robust moment
\begin{align*}
\theta_{DR,I}(y, a, z, w) =~& \hat{\theta}(z) + \hat{\Sigma}^{I}(z)^{-1}\, w (y - \langle \hat{\theta}(z), \phi(a, z)\rangle).
\end{align*}
\end{remark}

\begin{remark}[Estimating the First Stages]
Bounds on first stage errors as a function of sample complexity measures for the first stage hypotheses spaces can be obtained by standard results on the MSE achievable by regression problems (see e.g. \cite{rakhlin2017empirical, wainwright2019}). Essentially these are bounds for the regression estimates $\hat \theta(z)$ and $\hat \Sigma(z)$, as a function of the complexity of their assumed hypothesis spaces. Since the latter is a standard statistical learning problem that is orthogonal to our main contribution, we omit technical details. Since the square loss is a strongly convex objective the rates achievable for these problems are typically fast rates on the MSE (e.g. $h_{n,\delta}^2$ is of the order $1/n$ for the case of parametric hypothesis spaces, and typically $o(1/\sqrt{n})$ for reproducing kernel Hilbert spaces with fast eigendecay (see e.g. \cite{wainwright2019})). Thus the term $h_{n,\delta}^2$ is of lower order. For instance, the required rates for the term $h_{n,\delta}^2$ to be of second order in our regret bounds are achievable if these nuisance regressions are $\ell_1$-penalized linear regressions and several regularity assumptions are satisfied by the data distribution, even when the dimension $p$ of $z$ is growing with $n$.
\end{remark}

\paragraph{Extension: Semi-Bandit Feedback}
Suppose that our value function takes the form:
$V(a, z) = \phi(a, z)^T\, \Theta_0(z)\, \phi(a, z)$,
where $\Theta_0(z)$ is a $p\times p$ matrix and we observe semi-bandit feedback, i.e. we observe a vector $Y$ s.t.: $
    \E[ Y \mid a, z] = \Theta_0(z)^T \phi(a, z)$.
Then we can apply our DR approach to each coordinate of $Y$ separately.
\begin{equation*}
    V_{DR}(\pi) = \E_n\left[ \phi(\pi(z), z)^T\, \left(\hat{\Theta}(z) + \hat{\Sigma}(z)^{-1}\, \phi(a, z)\, (Y^T - \phi(a, z)^T \hat{\Theta}(z))\right) \, \phi(\pi(z), z) \right]
\end{equation*}
All the theorems in this section extend to this case, which will prove useful in our pricing application where $a$ is the price of a set of products and $Y$ is the vector of observed demands for each product.

\section{Application: Personalized Pricing} 

Consider the personalized pricing of a single product. The objective is the revenue:
\begin{align*}
    V(p, z) = p\, (a(z) - b(z)\, p)
\end{align*}
where $b(z)\geq \gamma > 0$ and $a(z) + b(z)p $ gives the unknown, context-specific demand function. We assume that we observe an unbiased estimate $d$ of demand:
\begin{equation*}
    \E[d\mid z, p] = a(z) - b(z)\, p
\end{equation*}
We want to optimize over a space of personalized pricing policies $\Pi$. If, for instance, the observational policy was homoskedastic (i.e. the exploration component was independent of the context $z$), we show in Appendix \ref{sec:pricing} that doubly robust estimators for $a(z)$ and $b(z)$ are
\begin{align*}
    a_{DR}(z) 
    =~& \hat{a}(z) + \left(1 + \hat{g}(z)\frac{\hat{g}(z) - p}{\hat{\sigma}^2} \right)\, (d - \hat{a}(z) - \hat{b}(z)\, p) \\
    b_{DR}(z) =~& \hat{b}(z) + \frac{p - \hat{g}(z)}{\hat{\sigma}^2} (d - \hat{a}(z) - \hat{b}(z)\, p)
\end{align*}
where $g(z)=\E[p\mid z]$ and the variance $\sigma^2$. Thus, in this example, we only need to estimate the mean treatment policy $\E[p\mid z]$ and the variance $\sigma^2$.

\begin{figure}[t]
\centering
\subfloat[Policy Evaluation]{\includegraphics[width=0.9 \textwidth,
height=5cm]{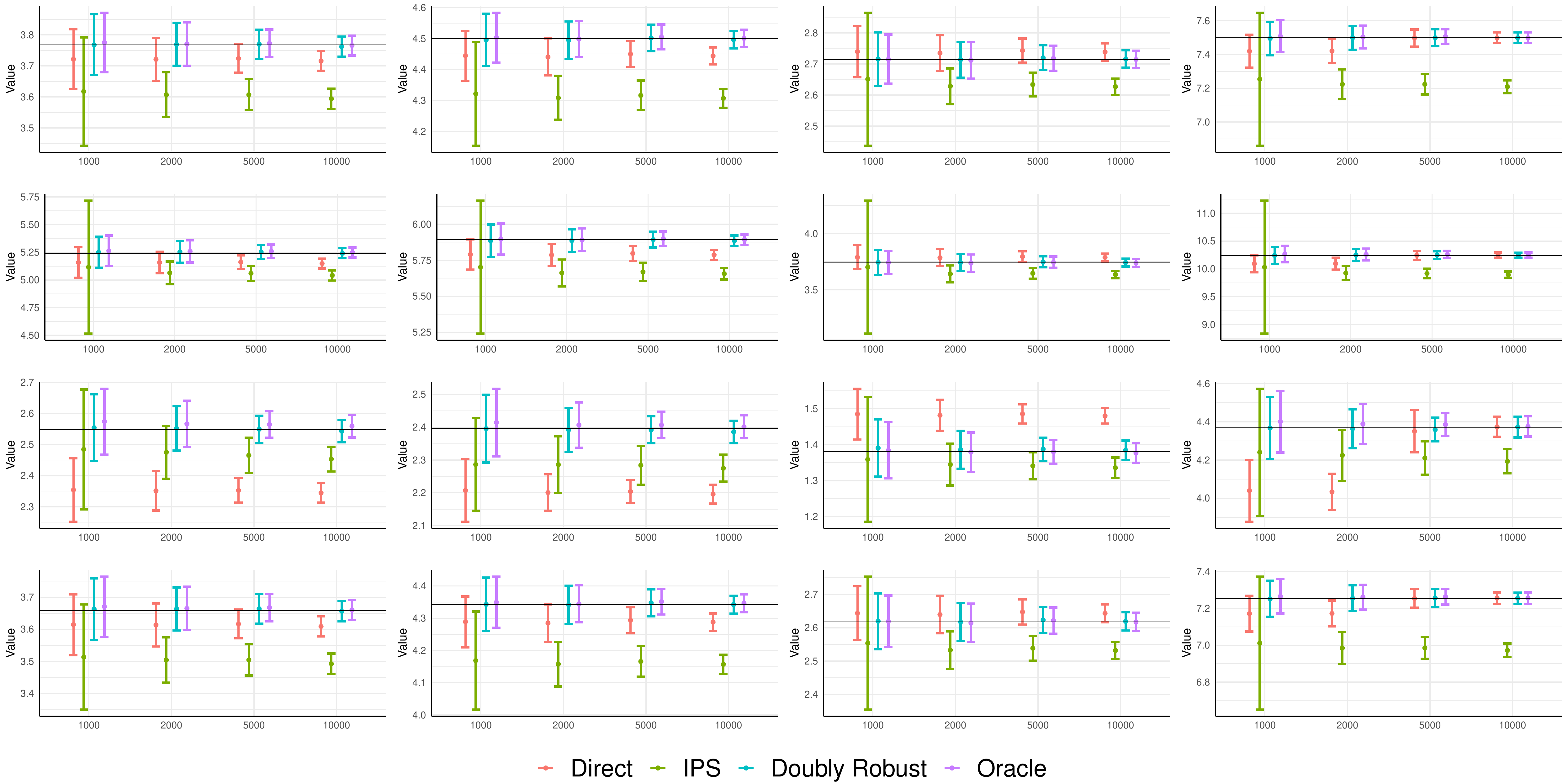}}

\vspace{-0.4cm}

\subfloat[Regret]{\includegraphics[width=0.9 \textwidth, height=2.5cm]{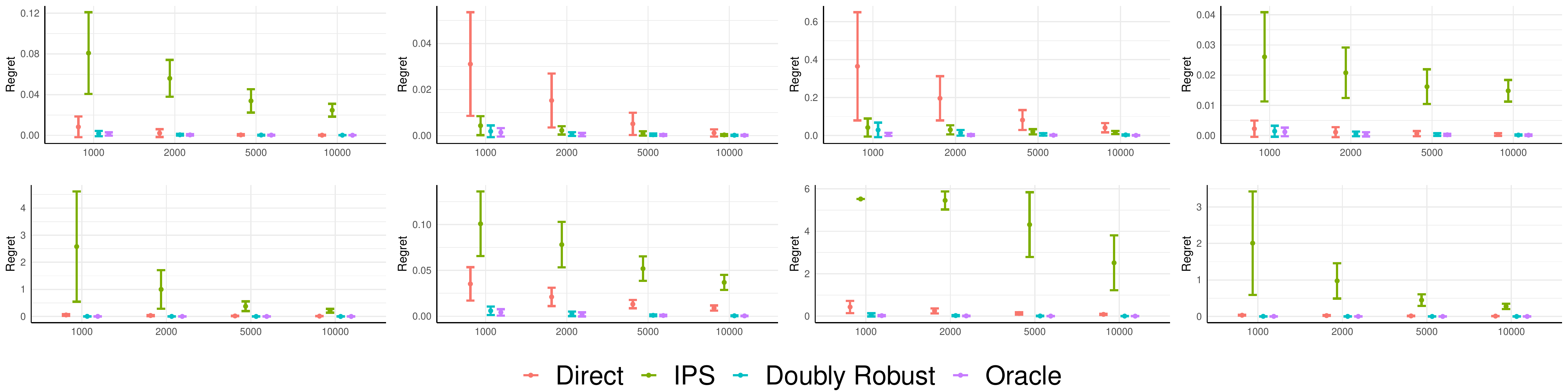}}
\caption{(a) Black line shows the true value of the policy and each line shows the mean and standard deviation of the value of the corresponding policy over 100 simulations. (b) Each line shows the mean and standard deviation of regret over 100 simulations. The top half reports the regret for a constant policy, the bottom half reports regret for a linear policy.}
\label{exp} 
\end{figure}

\paragraph{Experimental evaluation.} We empirically evaluate our framework on the personalized pricing application with synthetic data. In particular, we use simulations to assess our estimator's ability to evaluate and optimize personalized pricing functions. To do this, we compare the performance of our doubly robust estimator with (1) Direct estimator, $\langle \hat \theta (z),  \phi(a,z) \rangle$, (2) Inverse propensity score estimator \footnote{  $\theta_{IPS}(y, a, z) =  \hat{\Sigma}(z)^{-1}\, \phi(a, z)\, y$}, (3) Oracle orthogonal estimator, $v_{DR}^{o} (x, \pi)$.

\paragraph{Data Generating Process.} 
Our simulation design considers a sparse model. We assume that there are $k$ continuous context variables distributed uniformly $z_{i} \sim U(1,2)$ for $i=1, \dots, k$ but only $l$ of them affects demand. Let $\bar z = 1/l (z_i + \dots + z_l)$. Price $p$ and demand $d$ are generated as
 $   x  \sim \mathcal{N} (\bar z, 1),  d= a(\bar z) -   b (\bar z) x + \epsilon \  \text{and} \ \epsilon \sim  \mathcal{N} (0, 1) $. We consider four functional forms for the demand model: (i) (Quadratic) $ a (z) =2z^2,  b (z) = 0.6z$, (ii) (Step) $ a (z) = 5 \{ z<1.5\} + 6 \{ z>1.5\}, b (z) = 0.7 \{ z<1.5\} + 1.2 \{ z>1.5\}$, (iii) (Sigmoid) $ a (z) = 1/(1+\exp(z)) + 3, b(z) = 2/(1+\exp(z)) +0.1 $, (iv) (Linear) $ a (z) = 6 z, b (z) = z  $ 

These functions and the data generating process ensure that the conditional expectation function of demand given $z$ is non-negative for all $z$, the observed prices are positive with high probability, and the optimal prices are in the support of the observed prices. In each experiment, we generate 1000, 2000, 5000, and 10000 data points, and report results over 100 simulations. We estimate the nuisance functions using 5-fold cross-validated lasso model with polynomials of degrees up to 3 and all the two-way interactions of context variables. We present the results for two regimes: (i) Low dimensional with $k=2, l=1$, (ii) High dimensional with $k=10, l=3$.
\paragraph{Policy Evaluation.}  For policy evaluation we consider four pricing functions: (i) Constant, $ \pi(z)=1 $, (ii) Linear, $ \pi(z)=z $, (iii) Threshold, $ \pi(z)= 1 + 1\{ z>1.5 \}$, (iv) Sin, $ \pi(z)= \text{sin}(z)$. The results for the low dimensional regime are summarized in Figure \ref{exp}(a), where each row and column corresponds to a different demand function and a policy function, respectively\footnote{The results are very similar for the high dimensional model which are reported in Figure \ref{fig:exp2}(a) in the appendix.}. The results show that, as expected, our the performance of our method is very similar to the oracle estimator and achieves a significantly better performance than the direct and inverse propensity score methods, which suffer from large biases. These results also support our claim that the asymptotic variance of the doubly robust estimate is the same as the variance of the oracle method. It is also important to point out that we obtain similar performances across two different regimes.
\paragraph{Regret.} To investigate the regret performance of our method, we consider a constant pricing function, $\pi(z) = \gamma$ and a linear policy $\pi(z) = (\gamma_1 z_1 + \dots + \gamma_k z_k)$. We compute the optimal pricing functions in these two function spaces and report the distribution of regret in Figure \ref{fig:exp2}(b) under the low dimensional regime and in Appendix \ref{sec:figures} under the high dimensional regime. Across the four demand functions and two pricing functions we consider, our method achieves small regrets, comparable to the oracle method. The direct and inverse propensity methods, depending on the demand function, yield large regrets. 
\subsection{Quadratic Model} 
Finally, we consider the same simulation exercise under the assumption that an unbiased estimate of revenue rather than demand is observed. Since revenue depends on the $p^2$ the model is now quadratic
\begin{align*}
 r &= a(z) p -   b (z) p^2 + \epsilon
\end{align*}
For the data generating process we use the same functions $a(z)$ and $b(z)$ as in the personalized pricing example \footnote{We provide the calculation of the doubly robust estimators for this example in Appendix \ref{sec:pricing}.}. Figures \ref{fig:exp3} and \ref{fig:exp4} in Appendix \ref{sec:figures} summarize results for policy evaluation and optimization. The overall performance of our doubly robust estimator is similar to the demand model, and it performs better the direct model. One important difference to note is that when the sample size is small, we observe some finite sample biases for some function classes.

\begin{figure}[t]
\centering
\subfloat[Policy Evaluation]{\includegraphics[width=0.9 \textwidth,
height=5cm]{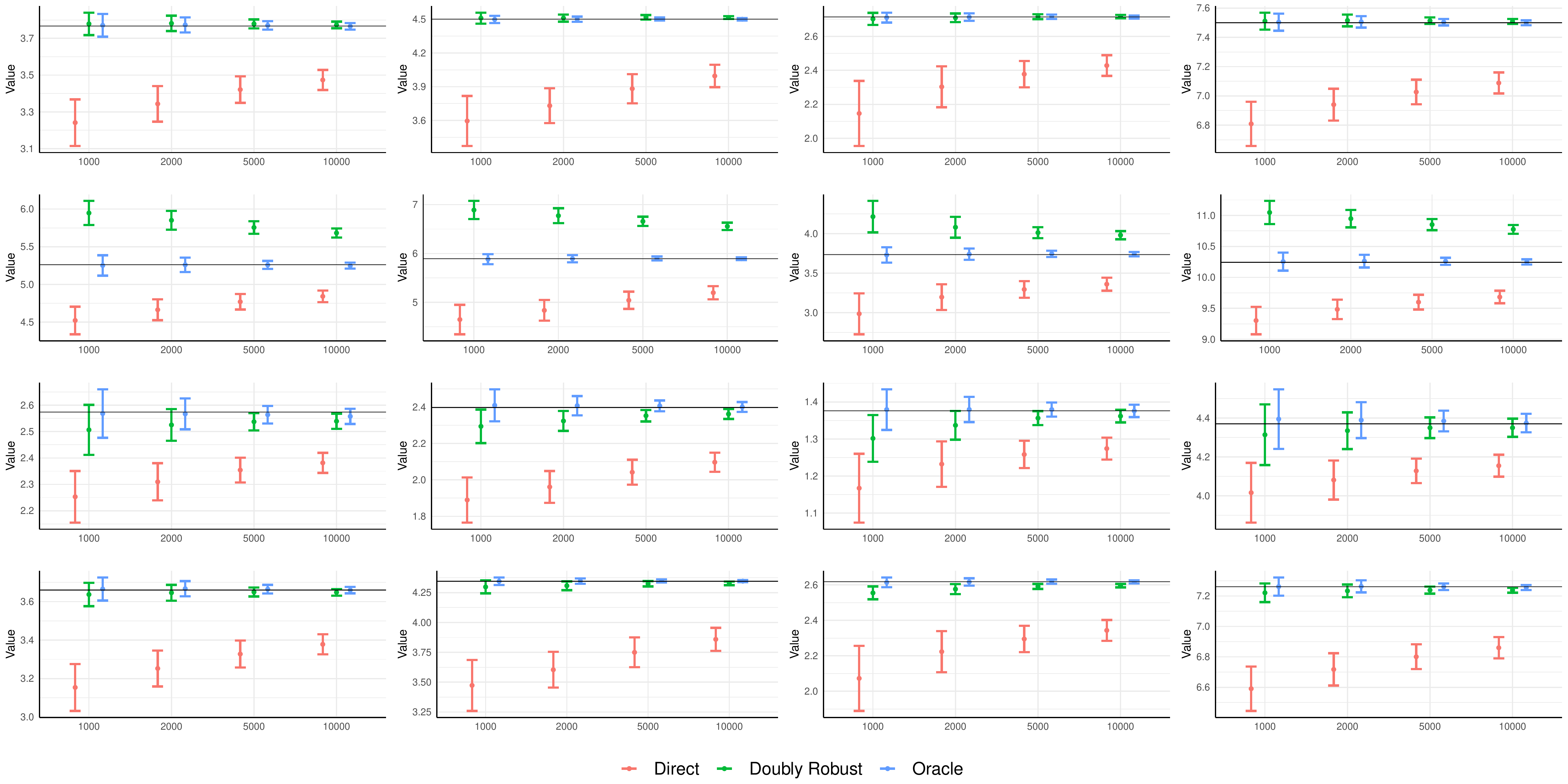}}

\subfloat[Regret]{\includegraphics[width=0.9 \textwidth, height=2.5cm]{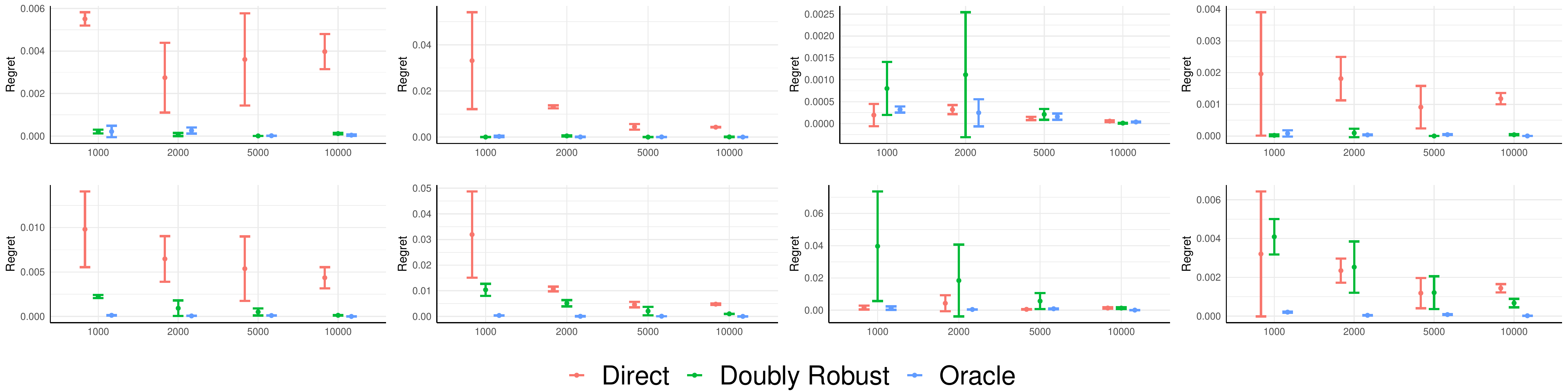}}
\caption{Quadratic, Low Dimensional Regime: (a) Black line shows the true value of the policy, each line shows the mean and standard deviation of the value of the corresponding policy over 100 simulations. (b) Each line shows the mean and standard deviation of regret over 100 simulations. We omit the results for the inverse propensity score method since they are too large to report together with the other estimates in the high dimensional regime.}
\label{fig:exp3} 
\end{figure}

\section{Application: Costly Resource Allocation} \label{sec:resource}

Motivated by a resource allocation scenario, we also analyze experimentally the special case where $\phi(a, z)=a$. Consider the case where we have $p$ possible tasks to invest in, and we have investment costs. Each task yields a return on investment that is a linear function of the investment, but an unknown function $\theta(z)$ of the context $z$. Moreover, to maintain an investment portfolio of $\pi(z)$ we need to pay a known cost $C(\pi(z))$. Given a policy space $\Pi: {\cal Z}\rightarrow \R^p$, our goal is to optimize:
\begin{equation}
    \sup_{\pi \in \Pi} \E\left[ \ldot{\theta(z)}{\pi(z)} - C(\pi(z)) \right]
\end{equation}

This falls into our framework, if we treat the offset part as of the form $\ldot{\theta_0(z)}{C(\pi(z))}$ but with a known $\theta_0(z)=1$. So in that case we simply consider $\theta_{DR,0}(z)=\theta_0(z)=1$. Then applying our framework we optimize:
\begin{equation}
    \sup_{\pi\in \Pi} \E_n\left[\ldot{\theta_{DR}(z)}{\pi(z)} - C(\pi(z))\right]
\end{equation}

In the case of quadratic costs $C(\pi(z))=\frac{\lambda}{2} \|\pi(z)\|_2^2$, then this boils down to exactly optimizing a square loss objective, since:
\begin{equation}
     \inf_{A} \E_n\left[\|\theta_{DR}(z)/\lambda - \pi(z)\|^2\right] \Leftrightarrow 
    \sup_{A} \E\left[\ldot{\theta_{DR}(z)}{\pi(z)}\right] - \frac{\lambda}{2}\E_n\left[\|\pi(z)\|_2^2\right]
\end{equation}
Thus policy optimization reduces to a multi-task regression problem where we are trying to predict $\theta_{DR}(z)/\lambda$ from $z$.\footnote{The above reasoning extends to heterogeneous costs across tasks e.g. $C(\pi(z)) = \sum_i c_i \pi_i(z)^2$. In this case the label of the $i$-th task of the multi-task regression problem is $\theta_{DR,i}(z)/c_i$ and we need to perform a weighted multi-task regression where the weight on the square loss for task $i$ is equal $c_i$.}

We can consider sparse linear policies:
\begin{equation}
    \Pi = \{z \rightarrow Az: \|A\|_{11} := \sum_i \|\alpha_i\|_1 \leq s\}
\end{equation}
where $\alpha_i$ corresponds to the $i$-th row of matrix $A$. In this case our problem reduces to the MultiTask Lasso problem where the label is $\theta(z)/\lambda$.

\paragraph{Experimental Evaluation.} For experimental evaluation we consider a model with two tasks, $a_1$ and $a_2$:
\begin{align*}
    y = a(z) a_1 + b(z) a_2 + \epsilon
\end{align*}
We use the same distributions and functions, $a(z)$ and $b(z)$, given above for the pricing application. To estimate the optimal allocation and its regret, we run a 5-fold cross validated MultiTask Lasso algorithm and set $\lambda=1$. We report the distribution of return on investment obtained from different models in Figure (\ref{resource}). The results suggest that doubly robust method achieves a significantly lower regret than the direct method in both regimes and its performance is similar to the oracle method \footnote{For comparison, the value achieved by best-in-class policy is 22.2 in low dimensional regime and ? in high dimensional regime. We omit the inverse propensity score regrets since they are too large to report together with other estimates}.
\begin{figure}[t]
\centering
\subfloat[Low Dimensional Regime]{\includegraphics[width=0.9 \textwidth,
height=2cm]{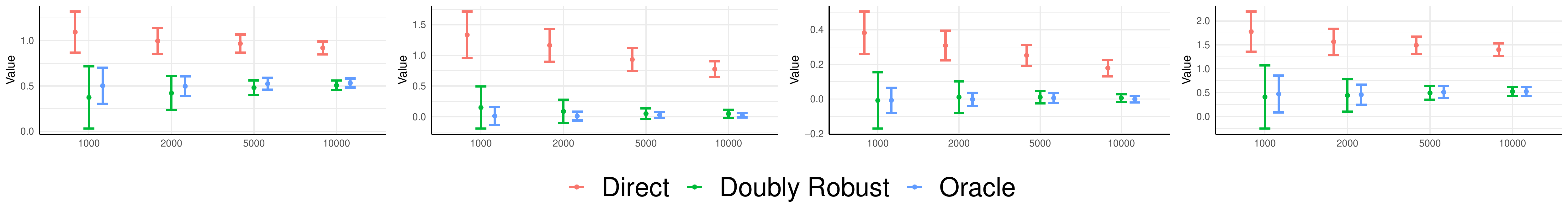}}

\subfloat[High Dimensional Regime]{\includegraphics[width=0.9 \textwidth, height=2cm]{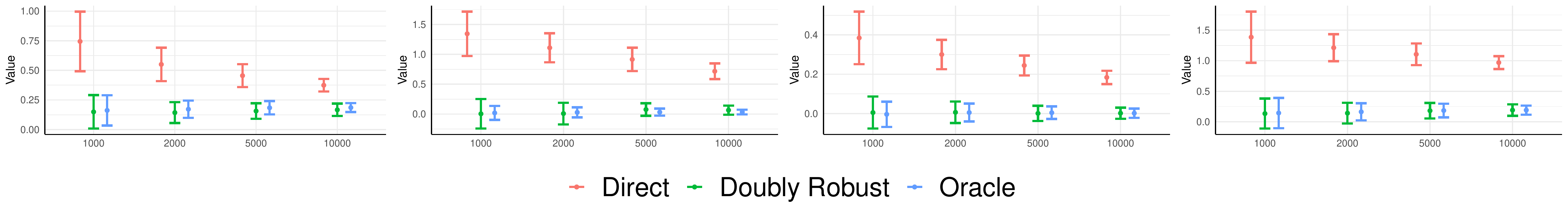}}
\caption{Costly Resource Allocation: Each line shows the mean and standard deviation of regret over 100 simulations.}
\label{resource} 
\end{figure}

\bibliographystyle{plain}
\bibliography{references}

\newpage

\appendix

\section{Proof of Universal Orthogonality Lemma}
We first start by defining a sufficient condition for the notion of \emph{universal orthogonality} of a loss function, as defined by \cite{foster2019orthogonal}. A loss function $L(\pi; h)=\E[\ell(x, \pi(z); h(z))]$ is universally orthogonal with respect to $h$ if for any $\pi\in \Pi$:
\begin{equation}
    \E[\nabla_{h(z), \pi(z)} \ell(x, \pi(z); h_0(z)) \mid z]=0
\end{equation}
where $h_0$ is the true value of the nuisance parameter $h$.

\begin{lemma}\label{lem:univ-orthogonality}
The loss function $L(\pi; h) = -\E[\ldot{\theta_{DR}(y, a, z)}{\phi(\pi(z), z)}]$ is universally orthogonal with respect to $h=(\theta, \Sigma)$.
\end{lemma}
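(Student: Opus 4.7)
The plan is to verify universal orthogonality directly by computing the Gateaux derivative of the per-sample loss with respect to each of the two nuisance components $\theta$ and $\Sigma$, evaluating at the true nuisance values $(\theta_0,\Sigma_0)$, and taking the conditional expectation given $z$. Writing the loss as
\[
\ell(x,\pi(z);\theta,\Sigma) = -\langle \theta(z),\phi(\pi(z),z)\rangle - \phi(\pi(z),z)^T \Sigma(z)^{-1}\phi(a,z)\bigl(y-\langle \theta(z),\phi(a,z)\rangle\bigr),
\]
makes the two nuisance pieces cleanly additive and will let me handle them in turn.

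For orthogonality in $\theta$, I will take the Gateaux derivative in direction $\delta\theta$, which produces the two terms $-\langle \delta\theta(z),\phi(\pi(z),z)\rangle$ and $\phi(\pi(z),z)^T\Sigma(z)^{-1}\phi(a,z)\langle \delta\theta(z),\phi(a,z)\rangle$. Taking $\E[\cdot\mid z]$ at $\Sigma=\Sigma_0$ and using the defining identity $\E[\phi(a,z)\phi(a,z)^T\mid z]=\Sigma_0(z)$, the second term collapses to $\phi(\pi(z),z)^T\Sigma_0(z)^{-1}\Sigma_0(z)\delta\theta(z)=\langle \delta\theta(z),\phi(\pi(z),z)\rangle$, exactly cancelling the first.

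For orthogonality in $\Sigma$, I will use the matrix identity $\partial_\Sigma \Sigma^{-1}[\delta\Sigma]=-\Sigma^{-1}\delta\Sigma\,\Sigma^{-1}$ to obtain the directional derivative
\[
D_\Sigma \ell[\delta\Sigma] \;=\; \phi(\pi(z),z)^T\Sigma(z)^{-1}\delta\Sigma(z)\Sigma(z)^{-1}\phi(a,z)\bigl(y-\langle \theta(z),\phi(a,z)\rangle\bigr).
\]
Evaluating at $\theta=\theta_0$ and conditioning on $z$, the factor $\E[\phi(a,z)(y-\langle \theta_0(z),\phi(a,z)\rangle)\mid z]$ vanishes by the tower property together with the model assumption $\E[y\mid a,z]=\langle \theta_0(z),\phi(a,z)\rangle$, killing the whole expression.

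Combining both computations shows $\E[D_{(\theta,\Sigma)}\ell(x,\pi(z);\theta_0,\Sigma_0)[(\delta\theta,\delta\Sigma)]\mid z]=0$ for every perturbation and every policy value $\pi(z)$, which is precisely the universal orthogonality condition. There is no real obstacle here; the only thing one must be careful about is to evaluate the $\theta$-derivative at the correct $\Sigma$ (namely $\Sigma_0$, so that the covariance identity fires) and the $\Sigma$-derivative at the correct $\theta$ (namely $\theta_0$, so that the conditional mean-zero residual kills the cross term), which is exactly the doubly robust structure at work.
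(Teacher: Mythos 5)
Your computation is correct and follows essentially the same route as the paper: differentiate the per-sample loss in each nuisance direction, use $\E[\phi(a,z)\phi(a,z)^T\mid z]=\Sigma_0(z)$ to cancel the $\theta$-derivative, and use the vanishing conditional moment of $\phi(a,z)$ times the residual to kill the $\Sigma$-derivative. Two small differences from the paper are worth noting. First, the paper parametrizes the second nuisance directly as $K=\Sigma^{-1}$, so its derivative in $K_{ij}$ is just $\phi_j(a,z)(y-\langle\theta_0(z),\phi(a,z)\rangle)$ with no chain-rule factor; your version with $\partial_\Sigma\Sigma^{-1}[\delta\Sigma]=-\Sigma^{-1}\,\delta\Sigma\,\Sigma^{-1}$ is equivalent since the sandwiching matrices are measurable with respect to $z$ and the same conditional moment vanishes. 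Second, and more substantively, you justify that vanishing moment by invoking the model assumption $\E[y\mid a,z]=\langle\theta_0(z),\phi(a,z)\rangle$, whereas the paper deliberately uses only the first-order condition of the conditional least-squares projection, $\E\left[(y-\langle\theta_0(z),\phi(a,z)\rangle)\,\phi(a,z)\mid z\right]=0$, obtained by combining the projection FOC for $V_0$ with the tower property. The latter holds even when the semi-parametric form is mis-specified and $\theta_0$ is replaced by the projection coefficient $\theta_p$ of Equation~\eqref{eq:proj}; your argument as written covers only the well-specified case, so to preserve the paper's robustness-to-mis-specification claims you should replace the appeal to $\E[y\mid a,z]$ with the projection orthogonality condition, which requires no other change to your derivation.
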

\begin{proof}
We show that the population loss function that corresponds to the doubly robust estimate, satisfies the universal orthogonality condition. For simplicity of notation let $K(z)=\Sigma(z)^{-1}$. Then the population loss is:
\begin{align*}
    V_{DR}^0(\pi; \hat{\theta}, \Sigma^{-1}) = \E\left[ \left\langle \hat{\theta}(z) + \Sigma^{-1}(z)\, \phi(a,z)\, (y - \langle \hat{\theta}(z), \phi(a,z)\rangle), \phi(\pi(z),z) \right\rangle \right]
\end{align*}
Let:
\begin{align*}
    \beta(a,z, \xi, K) =  \xi + K \phi(a,z)\, (y - \langle \xi, \phi(a,z)\rangle)
\end{align*}
Observe that:
\begin{align*}
    V_{DR}^0(\pi; \hat{\theta}, \Sigma^{-1}) = \E\left[ \left\langle \beta(a, \hat{\theta}(z), \Sigma^{-1}(z)), \phi(\pi(z),z) \right\rangle \right]
\end{align*}
To show universal orthogonality it suffices to show that:
\begin{align*}
    \E\left[ \nabla_{\xi, K}\beta(a, z, \theta_0(z), \Sigma_0^{-1}(z)) \mid z\right]=~&0
\end{align*}
This follows easily by simple algebraic manipulations:
\begin{align*}
    \E\left[ \nabla_{\xi}\beta(a,z,\theta_0(z), \Sigma_0^{-1}(z)) \mid z\right]=~&\E\left[\eye-\Sigma_0^{-1}(z)\, \phi(a,z) \phi(a,z)^T \mid z\right]\\
    =~&
    \eye-\Sigma_0^{-1}(z) \E\left[\phi(a,z) \phi(a,z)^T \mid z\right]=\eye-\Sigma_0^{-1}(z)\, \Sigma_0(z)=0
\end{align*}
and
\begin{align*}
    \E\left[ \nabla_{K_{ij}}\beta(a, \theta_0(z), \Sigma_0^{-1}(z)) \mid z\right]=~\E\left[\phi_j(a,z)\, (y - \langle \theta_0(z), \phi(a,z)\rangle)\mid z\right]
\end{align*}
Now observe that since $\theta_0(z)$ is
the minimizer of the conditional squared loss, taking the first order condition implies:
\begin{align*}
     \E[(V_0(a, z)  & - \ldot{\theta_0(z)}{\phi(a,z)})\, \phi(a,z  )\mid z]=0  \Longleftrightarrow \\ & \E[V_0(a, z)\, \phi(a,z)\mid z] = \E[\ldot{\theta_0(z)}{\phi(a,z)})\, \phi(a,z)\mid z]
\end{align*}
Moreover:
\begin{align*}
    \E[y\, \phi(a,z)\mid z] = \E[ \E[y\mid a, z]\, \phi(a,z)] = \E[V_0(a,z)\, \phi(a,z) ]
\end{align*}
Combining the two yields:
\begin{equation*}
\E\left[\phi(a,z)\, (y - \langle \theta_0(z), \phi(a,z)\rangle)\mid z\right] = 0
\end{equation*}
which implies orthogonality with respect to $K$.
\end{proof}

\section{Proof of Main Regret Theorem~\ref{thm:main-regret}}
\label{app:main-theorem}

We first consider an arbitrary empirical loss minimization problem of the form:
\begin{equation}
    f_n = \arg\min_{f\in \cF} \E_n[f(x)] := \frac{1}{n}\sum_{i=1}^n f(x_i)
\end{equation}
where $x_i\in \cX$ are i.i.d. drawn from an unknown distribution and $\cX$ is an arbitrary data space. Throughout the section we will assume that: $\sup_{f\in \cF} |f(x)|\leq 1$. All the results can be generalized to the case of $\sup_{f\in \cF} |f(x)|\leq R$, for some arbitrary $R$, by simply first re-scaling the losses, and then invoking the theorems of this section. 

We will also make the following preliminary definitions. For any function $f$ we denote with $\|f\|_2 = \sqrt{\E[f(x)^2]}$, the standard $L^2$ norm and with $\|f\|_{2,n}=\sqrt{\E_n[f(x)^2]}$ its empirical analogue. The localized Rademacher complexity is the defined as:
\begin{equation}
    \cR(r, \cF) = \E_{\epsilon, x_{1:n}}\left[\sup_{f\in \cF: \|f\|_2\leq r} \frac{1}{n} \sum_{i=1}^n \epsilon_i\, f(x_i)\right]  
\end{equation}
where $\epsilon_i$ are independent Rademacher variables that take values $\{-1, 1\}$ with equal probability. 

Furthermore, we define the empirical entropy of a function class $H_2(\epsilon, \cF, n)$ as the largest value, over the choice of $n$ samples, of the logarithm of the size of the smallest empirical $\epsilon$-cover of $\cF$ on the samples with respect to the $\|\cdot\|_{2,n}$ norm. Finally, we consider the empirical entropy integral defined as:
\begin{equation}
    \kappa(r, \cF) = \inf_{\alpha\geq 0}\left\{ 4\alpha + 10\int_{\alpha}^{r} \sqrt{\frac{\cH_2(\epsilon, \cF, n))}{n}} d\epsilon\right\},
\end{equation}
Throughout this section we will make the following benign assumption that essentially makes the problem \emph{learnable}:

\textbf{ASSUMPTION 1.} \textit{The function class satisfies that for any constant $r$, $\kappa(r, \cF)\rightarrow 0$ as $n\rightarrow \infty$}

We will use the following theorems from the prior work of \cite{foster2019orthogonal} as a starting point as they are formalized in manner convenient for our problem.
\begin{theorem}[Foster, Syrgkanis \cite{foster2019orthogonal}, Theorem 4]
\label{thm:regret-foster}
Consider any function class $\cF: \cX\rightarrow [-1, 1]$ and let $f_n$ be the outcome of the constrained ERM. Pick any $f_*\in \cF$ and let
$r=\sup_{f\in \cF}\|f-f_*\|_2$.  Then for some constants $C_1, C_2$ and for any $\delta>0$, w.p. $1-\delta$: 
\begin{align*}
\E[f_n(x) - f_*(x)]
\leq~& C_1\,\left( \cR(r, \cF-f^*) + r \sqrt{\frac{\log(1/\delta)}{n}} + \frac{\log(1/\delta)}{n}\right)\\
\leq~& C_1\, C_2\, \left( \kappa(r, \cF)  + r \sqrt{\frac{\log(1/\delta)}{n}} + \frac{\cH_2(r, \cF, n)}{n} +\frac{\log(1/\delta)}{n}\right).
\end{align*}
\end{theorem}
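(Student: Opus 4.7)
The plan is to prove this oracle inequality via the classical empirical process route: reduction to a uniform deviation, symmetrization, Talagrand/Bousquet concentration, and Dudley chaining. Note that because $r$ is defined as $\sup_{f\in\cF}\|f-f_*\|_2$, i.e.\ the $L^2$ diameter of $\cF$ around $f_*$, no fixed-point/localization argument is required; the complexity is evaluated at the single scale $r$.

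First I would reduce the excess risk to a uniform deviation. By ERM optimality and $f_*\in\cF$, $\E_n[f_n - f_*]\le 0$, so
\begin{equation*}
\E[f_n(x) - f_*(x)] \;\le\; (\E - \E_n)[f_n - f_*] \;\le\; \sup_{f\in\cF}(\E - \E_n)[f - f_*].
\end{equation*}
Writing $\mcG = \{f - f_* : f \in \cF\}$, each $g \in \mcG$ is bounded in $[-2, 2]$ and satisfies $\|g\|_2 \le r$ by definition of $r$. Standard symmetrization then yields
\begin{equation*}
\E\Big[\sup_{g\in\mcG}(\E - \E_n)[g]\Big] \;\le\; 2\,\cR(r,\cF - f_*),
\end{equation*}
since the $L^2$ constraint in the definition of $\cR(r,\cdot)$ is automatically satisfied on $\mcG$.

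Next I would upgrade the expectation bound to high probability via Bousquet's refinement of Talagrand's inequality, applied to $Z = \sup_{g\in\mcG}(\E - \E_n)[g]$ with variance proxy $r^2$ and envelope $2$: with probability at least $1-\delta$,
\begin{equation*}
Z \;\le\; \E[Z] + c_1\,r\sqrt{\frac{\log(1/\delta)}{n}} + c_2\,\frac{\log(1/\delta)}{n}.
\end{equation*}
Combining with the symmetrization bound and absorbing constants yields the first stated inequality. For the second inequality, I would invoke a Dudley chaining bound on the Rademacher complexity at scale $r$, namely
\begin{equation*}
\cR(r,\cF - f_*) \;\lesssim\; \kappa(r,\cF) + \frac{\cH_2(r,\cF,n)}{n},
\end{equation*}
where the first summand is exactly the entropy integral defined in the paper, and the $\cH_2(r,\cF,n)/n$ term arises from a Bernstein/union-bound step at the coarsest covering scale $r$, contributing the envelope-over-$n$ correction that a pure variance-based chaining bound does not capture.

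The main obstacle is the concentration step: a naive bounded-differences (McDiarmid) argument would yield only an envelope-scaled $\sqrt{\log(1/\delta)/n}$ term, not the variance-adaptive $r\sqrt{\log(1/\delta)/n}$ factor. Recovering the $r$ pre-factor requires the full strength of Bousquet's inequality with careful tracking of the wimpy variance $\sup_{g\in\mcG}\E[g^2]\le r^2$, and it is this step that drives the fast-rate variance dependence exploited downstream in Theorem~\ref{thm:main-regret}. The chaining step is more routine, but still demands attention to how the $\cH_2(r,\cF,n)/n$ correction is extracted so that it exactly matches the second stated bound.
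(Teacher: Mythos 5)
This statement is imported verbatim from Foster and Syrgkanis \cite{foster2019orthogonal} (their Theorem~4) and used as a black box, so the paper contains no proof of it to compare against; your reconstruction is the standard argument and essentially the one in the cited source: ERM optimality plus $f_*\in\cF$ reduces the excess risk to a one-sided uniform deviation, symmetrization bounds its expectation by $\cR(r,\cF-f_*)$ (with the localization vacuous since $r$ is the $L^2$ diameter around $f_*$, so no sub-root fixed-point machinery is needed), Bousquet's refinement of Talagrand's inequality with wimpy variance $r^2$ and envelope $2$ supplies the variance-scaled $r\sqrt{\log(1/\delta)/n}$ deviation term, and Dudley chaining with empirical entropy (which is unchanged by the shift by the fixed function $f_*$) gives the second display, with the $\cH_2(r,\cF,n)/n$ correction arising exactly as you say from the coarsest-scale Bernstein/union-bound step relating empirical and population radii. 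Your two flagged points of care --- that McDiarmid alone would only give an envelope-scaled deviation rather than the $r$ pre-factor, and that the entropy-integral step must be tracked carefully to produce the $\cH_2(r,\cF,n)/n$ term --- are precisely the right ones, and the sketch is sound.
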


\begin{lemma}[Foster, Syrgkanis \cite{foster2019orthogonal}, Lemma 4]
\label{lem:ucon-foster}
Consider a function class $\cF: \cX\rightarrow [-1,1]$ and pick any $f_*: \cX\rightarrow [-1, 1]$ (not necessarily in $\cF$). Moreover, let:
\begin{equation}
 Z_n(r) = \sup_{f\, \in\, \cF: \|f-f^*\|_{2}\leq r} \left| \E_n[f(x) - f_*(x)] - \E[f(x) - f_*(x)] \right|   
\end{equation}
Then for some constant $C_3$ and for any $\delta>0$, w.p. $1-\delta$:
\begin{equation*}
  Z_n(r) \leq 
  C_3\,\left(\cR(r, \cF-f^*)+ r \sqrt{\frac{\log(1/\delta)}{n}} + \frac{\log(1/\delta)}{n}\right)  
\end{equation*}
\end{lemma}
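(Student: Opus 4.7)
The plan is to derive this localized uniform-convergence bound via the standard two-step recipe: symmetrization to control the expectation by a Rademacher complexity, then Talagrand/Bousquet concentration to upgrade the expectation bound to a high-probability bound, with the key that the localization $\|f - f^*\|_2 \le r$ gives a $r^2$ variance proxy.

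First I would introduce the centered class $\mathcal{G}_r = \{\,x \mapsto (f(x) - f^*(x)) - \mathbb{E}[f(x) - f^*(x)] : f \in \cF,\ \|f - f^*\|_2 \le r\,\}$, so that $Z_n(r) = \sup_{g \in \mathcal{G}_r} |\mathbb{E}_n g|$. Two facts about $\mathcal{G}_r$ drive the bound: a uniform envelope $\sup_{g \in \mathcal{G}_r}\|g\|_\infty \le 4$ (since $f, f^* \in [-1,1]$), and a uniform variance bound $\sup_{g \in \mathcal{G}_r} \mathbb{E}[g^2] \le \sup_{f}\mathbb{E}[(f-f^*)^2] \le r^2$ from the localization. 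Then a standard symmetrization step yields $\mathbb{E}[Z_n(r)] \le 2 \mathbb{E}_{\epsilon,x_{1:n}}\!\left[\sup_{f \in \cF,\, \|f-f^*\|_2 \le r} \left|\tfrac{1}{n}\sum_i \epsilon_i (f(x_i)-f^*(x_i))\right|\right] \le 2\, \cR(r, \cF - f^*)$, after the usual maneuver (adding $\pm\mathcal{G}_r$) that removes the absolute value at the cost of an absolute constant.

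Second, I would apply Bousquet's (or Talagrand's) inequality to $Z_n(r)$ with envelope $b = O(1)$ and variance proxy $\sigma^2 = r^2$, giving for any $t > 0$ with probability at least $1 - e^{-t}$,
\[
Z_n(r) \le \mathbb{E}[Z_n(r)] + \sqrt{\tfrac{2t(\sigma^2 + 2 b\, \mathbb{E}[Z_n(r)])}{n}} + \tfrac{b t}{3 n}.
\]
Setting $t = \log(1/\delta)$, splitting $\sqrt{a+b} \le \sqrt{a}+\sqrt{b}$, and using AM-GM on the cross term $\sqrt{b\, \mathbb{E}[Z_n(r)] t / n}$ (absorbing half into $\mathbb{E}[Z_n(r)]$ and leaving an $O(t/n)$ remainder) yields $Z_n(r) \lesssim \mathbb{E}[Z_n(r)] + r \sqrt{\log(1/\delta)/n} + \log(1/\delta)/n$. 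Combining with the symmetrization bound delivers the stated inequality with an absolute constant $C_3$.

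I expect the main obstacle to be the bookkeeping of absolute constants and the precise instantiation of Talagrand's inequality so that the variance term appears as $r^2$ (not as the envelope $b^2$), because only with the localization-based variance proxy do we recover the desired $r\sqrt{\log(1/\delta)/n}$ scaling rather than the looser $\sqrt{\log(1/\delta)/n}$; a secondary but minor subtlety is handling the absolute value in $Z_n(r)$ either by invoking the two-sided version of Talagrand or by working with $\mathcal{G}_r \cup (-\mathcal{G}_r)$, which affects constants only.
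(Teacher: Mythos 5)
The paper does not prove this lemma at all: it is imported verbatim from Foster and Syrgkanis \cite{foster2019orthogonal} as a black-box ingredient, so there is no in-paper proof to compare against. Your symmetrization-plus-Bousquet argument, with the localization $\|f-f^*\|_2\le r$ supplying the variance proxy $r^2$ and AM--GM absorbing the cross term, is the standard (and essentially the cited source's) route to this bound, and it is correct up to the constant-factor bookkeeping you already flag.
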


Our goal is to replace $r$ in the latter Theorem with the worst-case variance of the functions $f\in \cF$ in a small ``regret''-ball around the optimal. We will achieve this by considering a slight modification of the ERM algorithm. In particular, we will split the data in half, and we will use one half as a \emph{regularization sample} and the other half as the \emph{training sample}. In particular, we will find the optimal function on the training sample, within the class of functions that also have relatively small regret on the regularization sample.

\paragraph{Out-of-Sample Regularized ERM} Consider the following algorithm:
\begin{itemize}
    \item We split the samples $S$ in two parts $S_1, S_2$ and let $\E_{n_1}[\cdot]$ and $\E_{n_2}[\cdot]$ denote the corresponding empirical expectations.
    \item We run ERM over $\cF$ on the first half and let $f_1$ be the outcome.
    \item Then we define the class of functions that have the constraint that they don't achieve much worse value than $f_1$ on the first half, i.e. we regularize policies based on their regret on the first half. More formally, for some constant $\mu_n$ to be defined later:
    \begin{equation}
        \cF_2 = \{f \in \cF: \E_{n_1}[f(x) - f_1(x)] \leq \mu_n\}
    \end{equation}
    \item Then we run constrained ERM on the second sample over the function space $\cF_2$:
    \begin{equation}
        f_2 = \arg\min_{f \in \cF_2} \E_{n_2}[f(x)]
    \end{equation}
\end{itemize}

\begin{theorem}[Variance-Based Regret]\label{thm:main-regret-general}
Let $f_*=\arg\min_{f\in \cF} \E[f(x)]$, $r=\sup_{f\in \cF} \|f\|_2$ and choose $\mu_n = C\,\left( \kappa(r, \cF) + r \sqrt{\frac{\log(6/\delta)}{n}} + \frac{\cH_2(r, \cF, n)}{n} +\frac{\log(6/\delta)}{n}\right)$, with $C=8\max\{C_1C_2, C_3C_2\}$. Then, w.p. $1-\delta$ over the sample $S$, the outcome $f_2$ of the Out-of-Sample Regularized ERM satisfies:
\begin{align}
    \E[f_2(x) - f_*(x)] =~& O\left( \kappa(\sqrt{V_2}, \cF_*(\mu_n)) +  \sqrt{\frac{V_2\, \log(3/\delta)}{n}}\right)
\end{align}
with: $\cF_*(\mu_n)=\{f\in \cF: \E[f(x)-f_*(x)]\leq \mu_n\}$ and $V_2 = \sup_{f\in \cF_*(\mu_n)} \Var(f(x)-f_*(x))$. Moreover, the expected regret, in expectation over the samples $S_1, S_2$ is also of order $O\left( \kappa(\sqrt{V_2}, \cF) +  \sqrt{\frac{V_2}{n}}\right)$.
\end{theorem}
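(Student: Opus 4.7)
The plan is to use the two-stage split to convert a regret-slice constraint into an effective variance-level localization: the first-stage ERM $f_1$ on $S_1$ restricts attention (via the constraint defining $\cF_2$) to a small-regret neighborhood of $f_*$ on the population, so that the second-stage constrained ERM on $S_2$ then benefits from a localized complexity governed by the variance $V_2$ rather than the diameter of the whole class $\cF$.

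First, I apply Theorem~\ref{thm:regret-foster} to $f_1$ on $S_1$. By the choice of the constant $C$ in $\mu_n$, this yields with probability at least $1-\delta/6$ that $\E[f_1(x) - f_*(x)] \leq \mu_n/8$. Simultaneously, I apply Lemma~\ref{lem:ucon-foster} on $S_1$ with center $f_*$ and radius $r$ equal to the diameter of $\cF-f_*$ (bounded since $|f|\leq 1$) to obtain, with probability $1-\delta/6$, the uniform deviation bound $\sup_{f\in\cF}\bigl|\E_{n_1}[f-f_*] - \E[f-f_*]\bigr| \leq \mu_n/8$. Combining these two events delivers two crucial facts: (i) since $\E[f_*-f_1]\leq 0$, one has $\E_{n_1}[f_*-f_1] \leq \mu_n/8 < \mu_n$, so $f_* \in \cF_2$; and (ii) every $f\in\cF_2$ satisfies $\E[f-f_*] = \E[f-f_1] + \E[f_1-f_*] \leq \E_{n_1}[f-f_1] + \mu_n/4 \leq 2\mu_n$, so $\cF_2 \subseteq \cF_*(2\mu_n)$.

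Next, I apply Theorem~\ref{thm:regret-foster} to the constrained ERM $f_2$ on $S_2$ over $\cF_2$, using feasibility of $f_*$. The localization radius entering that theorem is $r' = \sup_{f\in\cF_2}\|f-f_*\|_2$. For every $f\in\cF_2\subseteq \cF_*(2\mu_n)$, the decomposition $\E[(f-f_*)^2] = \Var(f-f_*) + (\E[f-f_*])^2$ gives $\|f-f_*\|_2^2 \leq V_2 + 4\mu_n^2$, hence $r' \leq \sqrt{V_2} + 2\mu_n$. The entropy integral and Rademacher complexity are monotone in both arguments, and $\cF_2 - f_* \subseteq \cF_*(2\mu_n) - f_*$, so the localized complexity at radius $r'$ over $\cF_2 - f_*$ is bounded by the localized complexity at radius $\sqrt{V_2} + 2\mu_n$ over $\cF_*(2\mu_n) - f_*$. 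Combining this with a third high-probability event (Theorem~\ref{thm:regret-foster} on $S_2$, costing another $\delta/6$), and absorbing the $\mu_n$-dependence as a lower-order term, yields the high-probability bound with $V_2$ and $\cF_*(\mu_n)$ matching the theorem statement up to constants in the slice radius. A union bound over the three events yields the stated $1-\delta$ claim, and the expected-regret form follows by the standard tail integration over $\delta$.

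The main obstacle is ensuring that $f_*$ itself survives the $S_1$-based empirical constraint: if $f_*\notin \cF_2$, then Theorem~\ref{thm:regret-foster} cannot be invoked with $f_*$ as the comparator in the second stage and the whole argument collapses. This is precisely why Lemma~\ref{lem:ucon-foster} must be invoked on $S_1$ in addition to the regret bound on $f_1$. A secondary subtlety is that $V_2$ is defined on the slice $\cF_*(\mu_n)$ whereas my argument establishes only $\cF_2\subseteq\cF_*(2\mu_n)$; reconciling the two is a purely cosmetic adjustment of the multiplicative constant in $\mu_n$ and does not affect the stated rate.
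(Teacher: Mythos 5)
Your proposal is correct and follows essentially the same route as the paper's own proof: establish $f_*\in\cF_2$ by combining the first-stage regret bound with uniform convergence on $S_1$, show $\cF_2$ sits inside a small-regret slice, bound the localization radius by $\sqrt{V_2}+O(\mu_n)$ via the variance-plus-squared-bias decomposition, and then invoke the constrained-ERM theorem on $S_2$ with the $\mu_n$ terms absorbed as lower order. The only differences are bookkeeping constants (e.g.\ landing in $\cF_*(2\mu_n)$ rather than $\cF_*(\mu_n)$), which, as you note, are absorbed into the choice of $C$.
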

\begin{proof}
First we argue that w.p. $1-\delta/6$, $f_*\in \cF_2$. By the choice of $\mu_n$ and Theorem~\ref{thm:regret-foster}, we know that w.p. $1-\delta/4$ over the randomness of sample $S_1$:
\begin{equation}
    \E[f_1(x) - f_*(x)] \leq \mu_n/2
\end{equation}
Moreover, by Lemma~\ref{lem:ucon-foster}, w.p. $1-\delta/6$ over the randomness of sample $S_1$:
\begin{align*}
    \sup_{f\in \cF} |\E_{n_1}[f(x)-f_*(x)] - \E[f(x)-f_*(x)]| \leq \mu_n/2
\end{align*}

Combining the latter two properties we have, w.p. $1-\delta/3$:
\begin{align*}
     |\E_{n_1}[f_*(x) - f_1(x)]| \leq |\E[f_*(x) - f_1(x)]| + \mu_n/2 \leq \mu_n
\end{align*}
Thus in this event, $f_*\in \cF_2$.

Applying Theorem~\ref{thm:regret-foster} for the last stage of the algorithm with function space $\cF_2$ and conditioning on the event that the first stage sample is such that $f_* \in \cF_2$, we have that with probability $1-\delta/3$ over the randomness of the second sample:
\begin{align*}
\E[f_2(x) - f_*(x)]
=~& C_1\, C_2\, \left( \kappa(r_2, \cF_2) + r_2 \sqrt{\frac{\log(3/\delta)}{n}} + \frac{\cH_2(r_2, \cF_2, n)}{n} + \frac{\log(3/\delta)}{n}\right)
\end{align*}
where $r_2 = \sup_{f\in \cF_2} \|f\|_2$. Thus by a union bound we get that with probability $1-2\delta/3$ over the randomness of both samples, the latter bound holds. 

Observe that for $f\in \cF_2$, by Lemma~\ref{lem:ucon-foster}, w.p. $1-\delta/6$ over the first sample:
\begin{align*}
    \sup_{f\in \cF} \left| \E_{n_1}[f(x) - f_1(x)] - \E[f(x)-f_1(x)]\right| \leq  2\, \sup_{f\in \cF}\left| \E_{n_1}[f(x)] - \E[f(x)]\right|
    \leq \mu_n/2
\end{align*}
Thus w.p. $1-\delta/6$, $\cF_2$ is a subset of the class:
\begin{align}
    \cF_2^0 = \{f\in \cF: |\E[f(x) - f_1(x)]| \leq \mu_n/2\}
\end{align}
Moreover, since $f_1$ has small regret, we know by the triangle inequality, for all $f \in \cF_2^0$, w.p. $1-\delta/3$:
\begin{align*}
     |\E[f(x) - f_*(x)]| \leq |\E[f(x) - f_1(x)]| + |\E[f_1(x) - f_*(x)]| \leq \mu_n
\end{align*}
Thus w.p. $1-\delta/3$, $\cF_2^0$ is in turn a subset of the function space:
\begin{align*}
    \cF_*(\mu_n) = \{f \in \cF: |\E[f(x) - f_*(x)]| \leq \mu_n \}
\end{align*}
which is a space of policies with regret at most $\mu_n$.

Thus we have that w.p. $1-\delta/3$ over the first sample:
\begin{align*}
    r_2^2=~& \sup_{f\in \cF_2} \E[(f(x) - f_*(x))^2] ~\leq~ \sup_{f\in \cF_*(\mu_n)} \E[(f(x) - f_*(x))^2]\\
    =~& \sup_{f\in \cF_*(\mu_n)} \left(\Var(f(x) - f_*(x)) + \E[f(x) - f_*(x))]^2\right)\\
    \leq~& \sup_{f\in \cF_*(\mu_n)} \Var(f(x) - f_*(x))  + \mu_n^2
\end{align*}
We thus have that:
\begin{equation}
    r_2 = \sqrt{\sup_{f\in \cF_*(\mu_n)} \Var(f(x) - f_*(x))}  + 2\mu_n = \sqrt{V_2}  + 2\mu_n
\end{equation}

Combining the latter with the regret bound for $f_2$ (excluding lower order terms in $n$) we have that w.p. $1-\delta$:
\begin{align*}
\E[f_2(x) - f_*(x)]
=~& O\left( \kappa(\sqrt{V_2} + 2\mu_n, \cF_*(\mu_n)) +  \sqrt{\frac{V_2\, \log(3/\delta)}{n}}\right)
\end{align*}
Moreover, using the concavity of the entropy integral with respect to its first argument, we have that:
\begin{align}
    \kappa(\sqrt{V_2} + 2\mu_n, \cF) \leq \kappa(\sqrt{V_2}, \cF_*(\mu_n)) + 2\mu_n \sqrt{\frac{H_2(\sqrt{V_2}, \cF, n)}{n}} 
\end{align}
Since $\kappa(r, \cF)\rightarrow 0$, we have that $\mu_n = o(1)$ and $H_2(\sqrt{V_2}, \cF, n)$ is a constant. Thus, the second term decays faster than $1/\sqrt{n}$ and hence is asymptotically negligible. Thus we get:
\begin{align*}
\E[f_2(x) - f_*(x)]
=~& O\left( \kappa(\sqrt{V_2}, \cF_*(\mu_n)) +  \sqrt{\frac{V_2\, \log(1/\delta)}{n}}\right)
\end{align*}
The expected regret bound follows by standard arguments by simply integrating the above high probability bound.
\end{proof}

Going back to our policy learning problem, let $x=(y, a, z)$ and:
\begin{equation}
    v_{DR}(x; \pi) = \ldot{\theta_{DR}(y, a, z)}{\phi(\pi(z),z)}
\end{equation}
be the doubly robust proxy value at every sample $x$ and policy $\pi$. Then we can apply this general theorem to the policy learning problem where, $x=(y, a, z)$ and function space:
\begin{equation}
    \cF_{\Pi} = \{- v_{DR}(\cdot;\pi): \pi\in \Pi\}
\end{equation}
Then Theorem~\ref{thm:main-regret-general} yields the following corollary:
\begin{corollary}[Variance-Based Policy Regret]\label{cor:policy-regret}
Let $\pi_*=\arg\max_{\pi\in \Pi}\E[v_{DR}(x;\pi)]$, $r=\sup_{\pi\in \Pi} \sqrt{\E[v_{DR}(z;\pi)^2]}$, $\mu_n = \Theta\left( \kappa(r, \cF_{\Pi}) + r \sqrt{\frac{\log(1/\delta)}{n}}\right)$ and 
\begin{equation}
    V_2 = \sup_{\pi\in \Pi:\, \E[v_{DR}(x;\pi_*)-v_{DR}(x;\pi)]\leq \mu_n} \Var(v_{DR}(x; \pi)-v_{DR}(x; \pi_*)).
\end{equation}
Then the policy $\pi_2$ returned by the out-of-sample regularized ERM, satisfies w.p. $1-\delta$ over the randomness of $S$:
\begin{align}
    \E[v_{DR}(\pi_*) - v_{DR}(\pi_2)] =~& O\left( \kappa(\sqrt{V_2}, \cF_{\Pi}) +  \sqrt{\frac{V_2\, \log(1/\delta)}{n}}\right)
\end{align}
and expected regret $O\left( \kappa(\sqrt{V_2}, \cF_{\Pi}) +  \sqrt{\frac{V_2}{n}}\right)$.
\end{corollary}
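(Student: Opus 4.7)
The plan is to obtain Corollary~\ref{cor:policy-regret} as a direct specialization of Theorem~\ref{thm:main-regret-general}, by lifting the policy learning problem into the abstract constrained-ERM framework that theorem is stated in. I would take $x = (y, a, z)$ as the data and consider the function class $\cF_{\Pi} = \{-v_{DR}(\cdot; \pi) : \pi \in \Pi\}$; the sign flip turns policy value maximization into loss minimization, so that the constrained ERM produced by the out-of-sample regularized algorithm (in the abstract notation) coincides, after a change of sign, with the policy $\pi_2$ returned by Algorithm~\ref{alg:implementation}. The best-in-class element in the abstract theorem is then $f_* = -v_{DR}(\cdot; \pi_*)$, with $\pi_*$ as in the corollary.

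The bulk of the work is a careful translation dictionary. The abstract suboptimality $\E[f_2(x) - f_*(x)]$ becomes $\E[v_{DR}(\pi_*) - v_{DR}(\pi_2)]$ under the sign flip. The regret slice $\cF_*(\mu_n) = \{f \in \cF : \E[f(x) - f_*(x)] \leq \mu_n\}$ maps exactly onto the index set $\{\pi \in \Pi : \E[v_{DR}(x;\pi_*) - v_{DR}(x;\pi)] \leq \mu_n\}$ appearing in the definition of $V_2$; because variance is invariant under a global sign flip, $\Var(f(x) - f_*(x)) = \Var(v_{DR}(x;\pi) - v_{DR}(x;\pi_*))$, so the quantities $V_2$ in the two statements coincide. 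The radius $r = \sup_{f \in \cF} \|f\|_2$ of the general theorem identifies with $r = \sup_{\pi \in \Pi} \sqrt{\E[v_{DR}(x;\pi)^2]}$ of the corollary. The $\mu_n$ in the corollary is stated with $\Theta(\cdot)$, which absorbs the lower-order $\cH_2(r, \cF, n)/n$ and $\log(1/\delta)/n$ summands that Theorem~\ref{thm:main-regret-general} produces.

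There are two small hurdles worth flagging. First, Theorem~\ref{thm:main-regret-general} is stated for classes with $\sup_{f \in \cF}|f(x)| \leq 1$; as the paper notes, one handles the general case by rescaling by a boundedness constant $R$ for $v_{DR}$. Since $r$, $V_2$, and $\kappa$ all scale consistently under this rescaling, the rescaling only inflates the absolute constants in the big-$O$. Second, the learnability hypothesis $\kappa(r, \cF) \to 0$ needed by the general theorem is precisely guaranteed by the assumption on the empirical entropy integral imposed on $\cF_{\Pi}$ earlier in Section~\ref{sec:ortho}. Once these two points are verified, invoking Theorem~\ref{thm:main-regret-general} delivers the high-probability bound
\begin{equation*}
\E[v_{DR}(\pi_*) - v_{DR}(\pi_2)] = O\!\left(\kappa(\sqrt{V_2}, \cF_{\Pi}) + \sqrt{\frac{V_2 \log(1/\delta)}{n}}\right),
\end{equation*}
and the expected-regret version follows by integrating the tail in the standard way. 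I do not foresee any genuinely hard step here: the corollary is essentially bookkeeping on top of the abstract result, and the only substantive thing to check is that the sign/variance identifications and the bounded-range rescaling leave the $V_2$-driven leading order intact.
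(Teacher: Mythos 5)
Your proposal is correct and matches the paper's own derivation: the paper obtains this corollary exactly by applying Theorem~\ref{thm:main-regret-general} to the sign-flipped class $\cF_{\Pi} = \{-v_{DR}(\cdot;\pi): \pi\in\Pi\}$, with the same identification of $f_*$, $r$, and $V_2$ (and the same implicit rescaling to handle bounded range). Nothing further is needed.
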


To arrive at our final theorem, we also need to account for the difference between $\E[v_{DR}(x;\pi)]$ and $V(\pi)$. This difference essentially stems from the error in the nuisance estimates, which introduce an error in $\theta_{DR}(y, a, z)$, such that $\E[\theta_{DR}(y, a, z)\mid z]\neq \theta(z)$. However, we can invoke the orthogonality of the doubly robust estimator and the general theorem of \cite{foster2019orthogonal} on generalization bounds of orthogonal losses to get:
\begin{lemma}\label{lem:ortho-regret}
For any policy $\pi_0\in \Pi$, let $\hat{\pi}$ be the outcome of any possibly randomized algorithm that satisfies w.p. $1-\delta/2$ a regret bound on the doubly robust objective, i.e. $\E[v_{DR}(x; \pi_0)-v_{DR}(x;\hat{\pi})] \leq R_{n,\delta}$. Moreover, suppose that the nuisance estimates satisfy a mean-squared error bound
\begin{equation}
    \max\left\{\E[(\hat{\theta}(z)-\theta_0(z))^2], \E[\|\hat{\Sigma}(z)-\Sigma_0(z)\|_{Fro}^2]\right\} := \chi_{n}^2
\end{equation}
Then w.p. $1-\delta$ over the randomness of the policy sample:
\begin{equation}
    V(\pi_0) - V(\hat{\pi}) \leq O\left(R_{n,\delta} + \chi_{n}^2\right)
\end{equation}
\end{lemma}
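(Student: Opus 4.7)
The plan is to exploit the universal orthogonality of the doubly robust moment, established in Lemma \ref{lem:univ-orthogonality}, to convert the given regret guarantee on the doubly robust \emph{proxy} objective into a regret guarantee on the \emph{true} value $V$, at a cost that is only second order in the nuisance errors. The central quantity to control is the ``bias''
$$b(\pi) := \E_{x}\bigl[v_{DR}(x;\pi;\hat\theta,\hat\Sigma)\bigr] - V(\pi),$$
where the expectation is taken only over $x=(y,a,z)$, with the nuisance estimates $\hat h = (\hat\theta,\hat\Sigma)$ frozen (they are trained on a separate split, so we may condition on them).

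First I would compute $b(\pi)$ in closed form. Writing $K = \Sigma^{-1}$, $\tilde\theta = \hat\theta-\theta_0$, $\tilde K = \hat K - K_0$, the function $v_{DR}(x;\pi;\theta,K)$ is \emph{bilinear} in $(\theta,K)$, so the expansion of $L(\pi;h) := \E_{x}[v_{DR}(x;\pi;h)]$ around $h_0$ has no $\theta\theta$ or $KK$ second-derivative terms and no contributions above second order. The two linear terms vanish exactly by Lemma \ref{lem:univ-orthogonality}, leaving only the mixed second-derivative piece, which a direct conditional calculation (using $\E[\phi(a,z)\phi(a,z)^T\mid z]=\Sigma_0(z)$ and $\E[y\mid a,z]=\langle\theta_0(z),\phi(a,z)\rangle$, or in the mis-specified case the first-order condition that defines the projection $\theta_0 = \theta_p$) shows equals
$$\E\bigl[v_{DR}(x;\pi;\hat\theta,\hat K)\mid z\bigr]-V(\pi(z),z) = -\bigl\langle \tilde K(z)\,\Sigma_0(z)\,\tilde\theta(z),\ \phi(\pi(z),z)\bigr\rangle.$$
Using boundedness of $\phi$, $\Sigma_0$, and $\Sigma_0^{-1}$ (a standard overlap-type positivity assumption), Cauchy--Schwarz over $z$ together with the matrix identity $\hat K - K_0 = \hat K(\Sigma_0-\hat\Sigma)K_0$ (which converts the hypothesis on $\|\hat\Sigma-\Sigma_0\|_{\mathrm{Fro}}$ into one on $\|\tilde K\|_{\mathrm{Fro}}$) then yields $\sup_{\pi\in\Pi}|b(\pi)|\leq C\,\chi_n^{2}$.

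With this uniform bias bound in hand, I conclude via the triangle decomposition
$$V(\pi_0)-V(\hat\pi) = -b(\pi_0) + \E\bigl[v_{DR}(x;\pi_0;\hat h)-v_{DR}(x;\hat\pi;\hat h)\bigr] + b(\hat\pi),$$
bounding the outer two terms by the preceding estimate $C\chi_n^2$ and the middle term by $R_{n,\delta}$ by hypothesis; a union bound over the events that the regret guarantee (probability $\geq 1-\delta/2$) and the nuisance MSE bound (probability $\geq 1-\delta/2$) both hold yields the claim with probability $1-\delta$.

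The main obstacle is the bias calculation itself: one must verify that the two linear-in-$\tilde\theta$ contributions, namely $\langle\tilde\theta,\phi_\pi\rangle$ and $-\langle K_0\Sigma_0\tilde\theta,\phi_\pi\rangle$, cancel \emph{exactly} (because $K_0\Sigma_0=I$) rather than merely to first order in $\tilde\theta$. Because $v_{DR}$ is bilinear in $(\theta,K)$, this is an exact finite identity rather than a Taylor-with-remainder statement, so no third-order terms need to be estimated -- a pleasant simplification over the more general orthogonal-loss machinery of \cite{foster2019orthogonal}. The only regularity condition left to track is the uniform positive-definiteness of $\hat\Sigma(z)$, needed to pass from Frobenius errors on $\Sigma$ to Frobenius errors on $K=\Sigma^{-1}$.
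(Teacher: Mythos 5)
Your proposal is correct, and it reaches the conclusion by a more self-contained route than the paper. The paper's own proof of this lemma is essentially a citation: it invokes Lemma~\ref{lem:univ-orthogonality} to establish universal orthogonality of the loss $-\E[v_{DR}(x;\pi)]$, asserts boundedness of the second derivatives in the nuisance outputs, and then delegates the entire quantitative step to Theorem~2 of \cite{foster2019orthogonal}, which is a general Taylor-expansion-with-remainder argument for orthogonal losses. You instead compute the bias $b(\pi)=\E[v_{DR}(x;\pi;\hat h)]-V(\pi)$ exactly: conditioning on $z$ and using $\E[\phi(a,z)\,y\mid z]=\Sigma_0(z)\theta_0(z)$ (the projection first-order condition, valid also under mis-specification) gives $\E[v_{DR}\mid z]-V(\pi(z),z)=\langle (I-\hat\Sigma(z)^{-1}\Sigma_0(z))(\hat\theta(z)-\theta_0(z)),\phi(\pi(z),z)\rangle = -\langle \tilde K(z)\Sigma_0(z)\tilde\theta(z),\phi(\pi(z),z)\rangle$, which is precisely the product-of-errors identity the paper derives separately in its double-robustness theorem (Appendix~\ref{app:dr}); Cauchy--Schwarz and the identity $\hat K-K_0=\hat K(\Sigma_0-\hat\Sigma)K_0$ then give $\sup_\pi|b(\pi)|=O(\chi_n^2)$, and your three-term decomposition (correctly using the \emph{uniform} bias bound to handle the data-dependent $\hat\pi$) finishes the argument. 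What your approach buys is that, because $v_{DR}$ is bilinear in $(\theta,\Sigma^{-1})$, the expansion is exact with no remainder to control, so you avoid the black-box machinery entirely; what it requires that you correctly flag, and that the paper leaves implicit in its "bounded second derivatives" remark, is a uniform lower bound on the eigenvalues of $\hat\Sigma(z)$ and $\Sigma_0(z)$ (an overlap-type condition) together with boundedness of $\phi$, so that Frobenius error in $\Sigma$ transfers to Frobenius error in $\Sigma^{-1}$. The only cosmetic caveat is that the lemma as stated defines $\chi_n^2$ deterministically as the realized MSE, so the union bound really only needs to absorb the $1-\delta/2$ regret event; this does not affect the conclusion.
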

\begin{proof}
By Lemma~\ref{lem:univ-orthogonality} we have that the loss function $-\E[v_{DR}(x; \pi)]$ is universally orthogonal as defined in \cite{foster2019orthogonal}. Moreover, the loss is smooth with respect to the outputs of the nuisance functions and hence the second order derivatives of the loss with respect to the outputs of the nuisance functions are bounded. Thus the lemma follows by Theorem~2 of \cite{foster2019orthogonal}.
\end{proof}

If we assume that the nuisance estimation algorithm guarantees that w.p. $1-\delta$, $\chi_n^2\leq h_{n,\delta}^2$ then observe that combining Corollary~\ref{cor:policy-regret} and Lemma~\ref{lem:ortho-regret}, we get that for any policy $\pi_0$, the policy $\pi_2$ of the out-of-sample regularized ERM satisfies, w.p. $1-\delta$:
\begin{equation*}
     V(\pi_0) - V(\pi_2) \leq  O\left( \kappa(\sqrt{V_2}, \cF_{\Pi}) +  \sqrt{\frac{V_2\, \log(1/\delta)}{n}} + h_{n,\delta}^2\right)
\end{equation*}
Similarly, if we assume that the nuisance esitmation algorithm satisfies $\E[\chi_{n}^2]\leq h_n^2$, then:
\begin{equation*}
     \E[V(\pi_0) - V(\pi_2)] \leq  O\left( \kappa(\sqrt{V_2}, \cF_{\Pi}) +  \sqrt{\frac{V_2\, \log(1/\delta)}{n}} + h_{n}^2\right)
\end{equation*}
We continue by proving the probabilistic regret bound of the theorem and the in-expectation bound follows analogously.

Finally, we need to account for the error introduced by the nuisance errors on the quantity $V_2$, so as to connect it with the semi-parametric efficiency variance of each policy, i.e.:
\begin{equation}
    \Var(v_{DR}^0(x;\pi))
\end{equation}
where $v_{DR}^0(x;\pi) = \ldot{\theta_{DR}^0(y, a, z)}{\phi(\pi(z), z)}$, and $\theta_{DR}^0(y, a, z)$ is the analogue of the doubly robust function, $\theta_{DR}(y, a, z)$, evaluated at the true nuisance functions. Moreover, we want our the ``regret slice'' to be with respect to the true regret, i.e. we want to depend on the variance of policies that satisfy:
\begin{equation}
    V(\pi_*^0) - V(\pi) := \E[v_{DR}^0(x; \pi_*^0) -v_{DR}^0(x; \pi)] \leq \mu_n' 
\end{equation}
where $\pi_*^0=\arg\max_{\pi\in \Pi} V(\pi)$. We prove such a property in the following lemma:

\begin{lemma}\label{lem:variance-cont}
Consider the setting of Corollary~\ref{cor:policy-regret}. Suppose that the mean squared error of the nuisance estimates is upper bounded w.p. $1-\delta$ by $h_{n,\delta}^2$ and let $\epsilon_n=\mu_n+h_{n,\delta}^2$. Then:
\begin{equation}
    V_2^0 = \sup_{\pi,\pi'\in \Pi_*(\epsilon_n)} \Var(v_{DR}^0(x; \pi)-v_{DR}^0(x; \pi')) 
\end{equation}
Then $V_2 \leq V_2^0 + O(h_{n,\delta})$.
\end{lemma}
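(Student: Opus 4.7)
The plan is to proceed in three steps: (i) show that any policy in the $\mu_n$-regret slice of the DR objective around $\pi_*$ lies in $\Pi_*(\epsilon_n)$, so that its true-nuisance pair variance with $\pi_*$ is bounded by $V_2^0$; (ii) bound the $L^2$ distance between $v_{DR}(\cdot;\pi)$ and $v_{DR}^0(\cdot;\pi)$ by $O(h_{n,\delta})$ using the explicit form of $\theta_{DR}$; (iii) combine these via a triangle inequality for standard deviations to conclude $V_2 \leq V_2^0 + O(h_{n,\delta})$.

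For step (i), I would first apply Lemma~\ref{lem:ortho-regret} with $\pi_0=\pi_*^0$ and $\hat\pi=\pi_*$: since $\pi_*$ maximizes $\E[v_{DR}]$, we have $\E[v_{DR}(x;\pi_*^0) - v_{DR}(x;\pi_*)]\leq 0$, and so the lemma gives $V(\pi_*^0) - V(\pi_*) = O(h_{n,\delta}^2)$. Next, for any $\pi$ with $\E[v_{DR}(x;\pi_*) - v_{DR}(x;\pi)] \leq \mu_n$, Lemma~\ref{lem:ortho-regret} with $\pi_0=\pi_*$ and $\hat\pi=\pi$ gives $V(\pi_*) - V(\pi) = O(\mu_n + h_{n,\delta}^2)$. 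Chaining these two bounds yields $V(\pi_*^0) - V(\pi) = O(\mu_n + h_{n,\delta}^2)=O(\epsilon_n)$, and similarly $V(\pi_*^0) - V(\pi_*)=O(\epsilon_n)$. Hence both $\pi$ and $\pi_*$ lie in $\Pi_*(\epsilon_n)$, and consequently $\Var\bigl(v_{DR}^0(x;\pi) - v_{DR}^0(x;\pi_*)\bigr)\leq V_2^0$.

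For step (ii), I would expand
\begin{equation*}
v_{DR}(x;\pi) - v_{DR}^0(x;\pi) = \ldot{(\hat\theta(z)-\theta_0(z)) + (\hat\Sigma(z)^{-1}-\Sigma_0(z)^{-1})\phi(a,z)(y-\ldot{\theta_0(z)}{\phi(a,z)}) - \hat\Sigma(z)^{-1}\phi(a,z)\ldot{\hat\theta(z)-\theta_0(z)}{\phi(a,z)}}{\phi(\pi(z),z)}.
\end{equation*}
Under the standard boundedness assumptions on $\phi$, on $y$, and on the smallest eigenvalue of $\Sigma_0(z)$ (which also gives a Lipschitz bound $\|\hat\Sigma(z)^{-1}-\Sigma_0(z)^{-1}\|_{Fro}=O(\|\hat\Sigma(z)-\Sigma_0(z)\|_{Fro})$ on a high-probability event that $\hat\Sigma$ is close enough to $\Sigma_0$), each term is linearly controlled by either $\hat\theta-\theta_0$ or $\hat\Sigma - \Sigma_0$, giving
\begin{equation*}
\|v_{DR}(\cdot;\pi) - v_{DR}^0(\cdot;\pi)\|_2^2 = O(h_{n,\delta}^2).
\end{equation*}
Applying this to both $\pi$ and $\pi_*$ and using the triangle inequality in $L^2$, set $f=v_{DR}(\cdot;\pi)-v_{DR}(\cdot;\pi_*)$ and $f^0=v_{DR}^0(\cdot;\pi)-v_{DR}^0(\cdot;\pi_*)$; then $\|f-f^0\|_2 = O(h_{n,\delta})$.

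For step (iii), I would use the subadditivity of the standard deviation, $\sqrt{\Var(f)} \leq \sqrt{\Var(f^0)} + \sqrt{\Var(f-f^0)}$, together with $\Var(f-f^0)\leq \|f-f^0\|_2^2 = O(h_{n,\delta}^2)$ and $\Var(f^0)\leq V_2^0$ from step (i). Squaring gives
\begin{equation*}
\Var(f) \leq V_2^0 + 2\sqrt{V_2^0}\cdot O(h_{n,\delta}) + O(h_{n,\delta}^2) = V_2^0 + O(h_{n,\delta}),
\end{equation*}
where the last equality uses the (boundedness-based) fact that $\sqrt{V_2^0}=O(1)$. Taking the sup over $\pi$ in the DR-regret slice yields the lemma.

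The main obstacle is the nuisance perturbation in step (ii): $\hat\Sigma^{-1}-\Sigma_0^{-1}$ is not linear in $\hat\Sigma-\Sigma_0$, so converting the MSE bound on $\hat\Sigma$ into an $L^2$ bound on $\hat\Sigma^{-1}$ requires a high-probability lower bound on the smallest eigenvalue of $\hat\Sigma$, which must come from the implicit boundedness/regularity assumptions on the nuisance space (as remarked in the paper). The other steps are essentially bookkeeping with the orthogonality lemma and the standard deviation triangle inequality.
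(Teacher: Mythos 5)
Your proposal is correct and follows essentially the same route as the paper: the paper likewise applies Lemma~\ref{lem:ortho-regret} twice (once with $\pi_0=\pi_*^0$, $\hat\pi=\pi_*$ using $\E[v_{DR}(x;\pi_*^0)-v_{DR}(x;\pi_*)]\leq 0$, and once with $\pi_0=\pi_*$, $\hat\pi=\pi$) to place $\pi$ and $\pi_*$ in $\Pi_*(\epsilon_n)$, and then passes from $\Var(v_{DR}(x;\pi)-v_{DR}(x;\pi'))$ to $\Var(v_{DR}^0(x;\pi)-v_{DR}^0(x;\pi'))+O(h_{n,\delta})$ by appealing to Lipschitzness of $\theta_{DR}$ in the nuisance outputs. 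Your steps (ii)--(iii), including the explicit decomposition, the standard-deviation triangle inequality, and the caveat about inverting $\hat\Sigma$ requiring an eigenvalue lower bound, are a more detailed justification of that one-line Lipschitzness claim rather than a different argument.
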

\begin{proof}
First observe that by Lemma~\ref{lem:ortho-regret} with $\pi_0=\pi_*$ and $\hat{\pi}=\pi$ (for any $\pi\in \cF_{\Pi}^2$), we have that:
\begin{align*}
    \E[v_{DR}(\pi_*)-v_{DR}(\pi)] \leq~& \mu_{n} \implies V(\pi_*)-V(\pi)\leq \mu_{n} + O(h_{n,\delta}^2)
\end{align*}
Similarly if we let $\pi_*^0=\arg\max_{\pi\in \Pi} \E[v_{DR}^0(x;\pi)]:=V(\pi)$, then observe that by definition of $\pi_*$: $\E[v_{DR}(x; \pi_*^0) - v_{DR}(x;\pi_*)]\leq 0$. Thus applying again Lemma~\ref{lem:ortho-regret} with $\pi_0=\pi_*^0$ and $\hat{\pi}=\pi_*$:
\begin{align*}
    \E[v_{DR}(\pi_*^0)-v_{DR}(\pi_*)] \leq~& 0 \implies V(\pi_*^0)-V(\pi_*)\leq O(h_{n,\delta}^2)
\end{align*}

Let $\Pi_*^0(\epsilon)=\{\pi\in \Pi: V(\pi_*^0) - V(\pi)\leq \epsilon\}$ and let $\epsilon_n = O(\mu_{n} + h_{n,\delta}^2)$. Thus we have that:
\begin{align*}
    V_2 \leq \sup_{\pi\in \Pi_*(\epsilon_n)} \Var(v_{DR}(x; \pi)-v_{DR}(x; \pi_*))
\end{align*}
Moreover, observe that $\pi_*\in \Pi_*^0(\epsilon_n)$. Hence:
\begin{align*}
    V_2 \leq \sup_{\pi,\pi'\in \Pi_*(\epsilon_n)} \Var(v_{DR}(x; \pi)-v_{DR}(x; \pi'))
\end{align*}

Moreover, by Lipschitzness of $\theta_{DR}(y, a, z)$ on the output of the nuisance functions, we also have that for any $\pi,\pi'\in \Pi(\epsilon_n)$:
\begin{equation}
    \Var(v_{DR}(x; \pi)-v_{DR}(x; \pi')) \leq 
    \Var(v_{DR}^0(x; \pi)-v_{DR}^0(x; \pi')) + O(h_{n,\delta})
\end{equation}

Hence, if we denote with:
\begin{equation*}
    V_2^0 = \sup_{\pi,\pi'\in \Pi_*(\epsilon_n)} \Var(v_{DR}^0(x; \pi)-v_{DR}^0(x; \pi')) 
\end{equation*}
Then we conclude that:
\begin{equation*}
    V_2 = V_2^0 + O(h_{n,\delta})
\end{equation*}
\end{proof}

Invoking Lemma~\ref{lem:variance-cont} and the concavity of the entropy integral function we get:
\begin{equation}
    V(\pi_*^0) - V(\hat{\pi}) \leq  O\left( \kappa(\sqrt{V_2^0}, \cF_{\Pi}) +  \sqrt{\frac{V_2^0\, \log(1/\delta)}{n}} + h_{n,\delta}^2 + h_{n,\delta}\frac{1}{\sqrt{n}} \right)
\end{equation}
Since $h_{n,\delta}=o(1)$, the last term is of lower order. This concludes the proof of the main regret Theorem~\ref{thm:main-regret}.

\section{Review of Semi-parametric Efficiency Bounds}

In this section, we review the theory of semi-parametric efficiency bounds studied in \cite{newey1990semiparametric} and \cite{bickel1993efficient}.

\subsection{Definitions} \label{sec:def}

\begin{definition}[Mean Square Differentiability]
Let $f(x ; \eta)$ denote the probability density function of a random variable $x$ where $\eta \in H$ is a finite dimensional parameter.  $f(x ; \eta)^{1/2}$is $\mu$-mean square continuously differentiable with respect to $\eta$ on  $H$ with derivative $f_{\eta}(x ;\eta)$ if for each $\eta \in H$  $\int \norm{f_{\eta}(x; \eta)}^2 d\mu$ is finite, and for every $\eta_{i} \rightarrow \eta$ with $\int \norm{f_{\eta}(x ;\eta_{i}) - f_{\eta}(x ;\eta)}^2 d \mu \rightarrow 0$ 
\begin{align*}
\int \Big( f(x ; \eta_{i})^{1/2} - f(x ; \eta)^{1/2} - f_{\eta}(x; \eta)^{\prime}(\eta_i - \eta) \Big)^2 d \mu / \norm{\eta_{i} - \eta}^{2} \rightarrow 0 
\end{align*}

\end{definition}

\begin{definition}[Smoothness]
$f(x ; \eta)$ is smooth if (i) $\eta \in H$, $H$ is open; (ii) there is a measure $\mu$ dominating $f(x ; \eta)$ for $\eta  \in H$ such that $f(x ; \eta)$ is continuous on $H$ a.s. $\mu$ ; (iii) $f(x ; \eta)^{1/2}$ is mean square differentiable.
\end{definition}

\begin{definition}[Score and Information Matrix]
For smooth $f(x ; \eta)$ the score for $\eta$ is defined as
$$S_\eta (x;  \eta) := 2 \dfrac{ f_{\eta}(x ; \eta)}{ f(x ;  \eta)}$$
in the support of $x$ and the information matrix is 
$$ \mathcal{I}(\eta) = \int S_{\eta} S^{\prime}_{\eta} f(x ; \eta) d \mu. $$
\end{definition}

\begin{definition}[Regularity]
A likelihood function $f(x ; \eta)$, $\eta \in H$, is regular if it is smooth and information matrix is non-singular in $H$. The efficiency bound of a regular model is given by Cramer-Rao bound and equals $\mathcal{I}(\eta)^{-1}$.
\end{definition}

\begin{definition}[Linearity]
Define a set $\mathcal{T}$ to be linear if $a s_{1} + b s_{2} \in \mathcal{T}$ for all real scalars $a$ and $b$ and elements $s_{1}$ and $s_{2}$ of $\mathcal{T}$.
\end{definition}

\subsection{Derivation of the Efficiency Bound} \label{sec:eff_bound}

Let data $(x_{1}, \dots, x_{n})$ consist of i.i.d copies of the random vector $(y,a,z)$. A semi-parametric model consists of a parameter vector $\alpha$ and a set of restrictions on the joint behavior of observables. In our model, the restrictions are given by the first order conditions of the linear projection
\begin{align*}
\E\left[\, (y - \langle \theta_0(z), \phi(a,z )\rangle) \phi(a,z)\mid z\right] &= 0 
\end{align*}
and the parameter is
\begin{align*} 
    \alpha = \int  \langle \theta (z),\phi(\pi(z),z) \rangle f(z) dz
\end{align*}
where $f(z)$ denotes the probability distribution function of $z$. First, we provide the definition of a parametric submodel.

\begin{definition}[Parametric Submodel]
For estimators with i.i.d data, a parametric submodel corresponds to a parameter vector $\eta$ and a likelihood function $\ell(x | \eta)$ for a single observation that satisfies the semi-parametric restrictions.
\end{definition}

A parametric submodel is a subset of the model distributions satisfying the semi-parametric assumptions. The reason parametric submodels are useful in analyzing semi-parametric efficiency is that for parametric models, the Cramer-Rao bound gives the lower bound on the variance of estimators of a parameter under some regulatory conditions. Since semi-parametric models impose weaker restrictions than any parametric model, it is natural to expect that the asymptotic variance of a semi-parametric model is no smaller than the bound for the parametric model.

In a parametric submodel, our parameter of interest can be written as
\begin{align}  \label{eq:param}
    \alpha = \int  \langle \theta (z ; \eta),\phi(\pi(z),z) \rangle f(z; \eta) dz
\end{align}

Next, we define the semi-parametric efficient bounds.

\begin{definition}[Semi-parametric Efficiency Bound]
The semi-parametric efficiency bound of a semi-parametric estimator is defined as the supremum of the Cramer-Rao bounds for all regular parametric submodels.
\end{definition}

This definition is intuitive because any semi-parametric estimator that is consistent and asymptotically normal cannot have a lower variance than the supremum of Cramer-Rao bounds. The regulatory conditions defined in Section \ref{sec:def} guarantee that the Cramer-Rao bound is well-defined and gives an asymptotic efficiency bound.

To be able to obtain the Cramer-Rao bound for the parameter of interest under a parametric submodel, the parameter must be pathwise differentiable.

\begin{definition} [Pathwise Differentiability]
A parameter $\alpha$ is pathwise-differentiable if $\alpha(\eta)$ is differentiable for all smooth parametric submodels and there exists $q \times 1$ random vector $d$ such that $\E[d^{\prime} d]$ is finite and for all regular parametric submodels 
\begin{align*}
\frac{\partial \alpha(\eta_{0})}{\partial \eta} = \E[ d S^{\prime}_{\eta} ]    
\end{align*}
where $\eta_{0}$ denotes the true value of the parameter in the sense that  $\ell(x | \eta_{0})$ corresponds to the likelihood function that generates the data.
\end{definition}
Pathwise differentiability of a parameter is a weak condition because, by Riesz representation theorem, a parameter is pathwise-differentiable if it can be written as a functional that is mean-square continuous. From the definition of $\alpha$ in Equation (\ref{eq:param}) it is easy to see that $\alpha$ is pathwise-differentiable by Riesz representation theorem.

For a pathwise-differentiable parameter, the Cramer-Rao bound can be written as a function of the pathwise derivative using the Delta method.
\begin{align*}
    \text{Var}(\alpha(\eta_{0}))& = \frac{\partial \alpha(\eta_{0})}{\partial \eta}  (\E[S_{\eta}S_{\eta}^{\prime}])^{-1} \frac{\partial \alpha(\eta_{0})}{\partial \eta}^\prime \\
    &= \E[ d S^{\prime}_{\eta} ]   (\E[S_{\eta}S_{\eta}^{\prime}])^{-1} \E[S_{\eta} d^{\prime} ]  
\end{align*}
We can write $ \text{Var}(\alpha(\eta)) $ as a second moment of a random variable as follows
\begin{align*}
    \text{Var}(\alpha(\eta_{0})) & =  \E[ d S^{\prime}_{\eta} ]   (\E[S_{\eta}S_{\eta}^{\prime}])^{-1} \E[S_{\eta} d^{\prime} ]   \\
    & = \E \big[  \E[ d S^{\prime}_{\eta} ] (\E[S_{\eta}S_{\eta}^{\prime}])^{-1} S_{\eta}S_{\eta}^{\prime}(\E[S_{\eta}S_{\eta}^{\prime}])^{-1} \E[S_{\eta} d^{\prime} ]   \big] \\
    & = \E[ d_{\eta} d_{\eta}^{\prime} ] 
\end{align*}
Note that $d_{\eta}$ is mean-zero since
\begin{align*}
\E[d_{\eta}] &= \E \big[  \E[ d S^{\prime}_{\eta} ] (\E[S_{\eta}S_{\eta}^{\prime}])^{-1} S_{\eta} ] \big] \\ 
&=  \E[ d S^{\prime}_{\eta} ] (\E[S_{\eta}S_{\eta}^{\prime}])^{-1} \E[S_{\eta}] \\
&= 0
\end{align*}
This is useful because the Cramer-Rao bound of $\alpha$ under a parametric submodel equals the variance of $d_{\eta}$. Note further from the definition of $d_{\eta}$ that it is the linear projection of pathwise-derivative $d$ on score $S_{\eta}$. Therefore, the largest value of this projection can be obtained by considering the projection space as the scores corresponding to all parametric submodels. To formalize this, we next define the tangent set:
\begin{definition}[Tangent Set]
Define the tangent set $\mathcal{T}$ to be the mean square closure of all $q$-dimensional linear combinations of scores $S_{\eta}$ for smooth parametric submodels:
\begin{align*}
T = \{ s \in \mathbbm{R} : \E[ \Vert s \Vert^{2} ] \leq \infty , \quad \exists A_{j} S_{\eta j} \quad \text{with} \quad \lim\limits_{j \rightarrow \infty} \E[ \Vert s - A_{j} S_{\eta j} \Vert^{2} ] =0 \}     
\end{align*}
\end{definition}

The projection of $d$ on the tangent set should have a larger variance than any particular submodel, suggesting that the projection should give the semi-parametric efficiency bound. The mathematical meaning of this projection on the tangent set is a least-squares projection in a Hilbert space of random vectors. This projection is defined as:
\begin{align*}
\delta \in \mathcal{T}, \quad \E[ (d - \delta) s ] = 0 \quad \text{for all} \quad s\in \mathcal{T}
\end{align*}
If $\mathcal{T}$ is linear, then $\delta$ exists and unique. It is called the efficient score because it equals the efficient influence function in asymptotically linear estimators.
\begin{theorem}[\cite{newey1990semiparametric}, Theorem 3.1]
Suppose that the parameter is differentiable, $\mathcal{T}$ is linear, and  $\E[\delta \delta^{\prime}] $ is nonsingular, for the projection $\delta$ of $d$ on $\mathcal{T}$. Then semi-parametric efficiency bound equals $\E[\delta \delta^{\prime}]$.
\end{theorem}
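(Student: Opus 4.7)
The plan is to pin down the semi-parametric efficiency bound as $\E[\delta\delta']$ by proving two matching matrix inequalities. First I would rewrite the Cramer-Rao bound of an arbitrary regular parametric submodel $\ell(x;\eta)$ in terms of the pathwise derivative $d$: using $\partial\alpha(\eta_0)/\partial\eta = \E[d\,S_\eta']$, the Cramer-Rao matrix becomes
\begin{equation*}
\E[d\,S_\eta']\,(\E[S_\eta S_\eta'])^{-1}\,\E[S_\eta\, d'] \;=\; \E\!\big[P_\eta(d)\,P_\eta(d)'\big],
\end{equation*}
where $P_\eta$ denotes the $L^2$ projection onto the linear span of the coordinates of $S_\eta$, applied coordinatewise to the vector $d$. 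This identity is the bridge between the finite-dimensional Cramer-Rao inequality and the infinite-dimensional projection geometry of $\mathcal{T}$; it is also what allows a single object (the pathwise derivative) to control every submodel simultaneously.

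For the upper bound I would argue one scalar direction at a time. For each $c\in\mathbb{R}^q$, since $\mathrm{span}(S_\eta)\subseteq\mathcal{T}$ and projections onto larger closed linear subspaces can only grow in $L^2$-norm,
\begin{equation*}
c'\,\E[P_\eta(d)\,P_\eta(d)']\,c \;=\; \E\!\big[(P_\eta(c'd))^2\big] \;\le\; \E\!\big[(P_{\mathcal{T}}(c'd))^2\big] \;=\; c'\,\E[\delta\delta']\,c,
\end{equation*}
where the last equality uses linearity of the projection (which is where the hypothesis that $\mathcal{T}$ is linear enters) to write $P_{\mathcal{T}}(c'd)=c'\delta$. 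This shows every regular submodel's Cramer-Rao matrix is dominated by $\E[\delta\delta']$ in the positive semidefinite order, so the supremum over regular submodels is at most $\E[\delta\delta']$.

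For the matching lower bound I would exploit that $\mathcal{T}$ is by definition the $L^2$-closure of linear combinations of submodel scores. Choose a sequence of regular parametric submodels with scores $S_{\eta_j}$ and matrices $A_j$ such that $\E[\|\delta - A_j S_{\eta_j}\|^2]\to 0$. Since $d-\delta$ is orthogonal to $\mathcal{T}\supseteq\mathrm{span}(S_{\eta_j})$, we have $P_{\eta_j}(d)=P_{\eta_j}(\delta)$, and the best-approximation inequality gives
\begin{equation*}
\E[\|P_{\eta_j}(\delta)-\delta\|^2] \;\le\; \E[\|A_j S_{\eta_j}-\delta\|^2] \;\to\; 0.
\end{equation*}
Hence $\E[P_{\eta_j}(d)P_{\eta_j}(d)']\to\E[\delta\delta']$ entrywise, and combined with the first step this proves the supremum of Cramer-Rao bounds equals $\E[\delta\delta']$, which is the semi-parametric efficiency bound by definition.

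The main obstacle will be guaranteeing the approximating submodels can be taken \emph{regular} in the sense of the paper, namely smooth with nonsingular information matrix. The density statement defining $\mathcal{T}$ provides scores only as $L^2$ limits, so one must verify they arise from genuine parametric families satisfying the mean-square differentiability condition; if $\E[S_{\eta_j}S_{\eta_j}']$ happens to be singular one can drop redundant coordinates of $\eta_j$ without changing the span, and the hypothesis that $\E[\delta\delta']$ is nonsingular ensures the limiting bound is itself nondegenerate. A secondary subtlety is that pathwise differentiability is the crucial input making the representation $\partial\alpha(\eta_0)/\partial\eta=\E[d\,S_\eta']$ hold across all regular submodels with the \emph{same} representer $d$; without it, the Hilbert-space projection argument in the upper bound would not couple the different finite-dimensional Cramer-Rao bounds to a common object, and the reduction to geometry on $L^2$ would break down.
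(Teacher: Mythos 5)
Your proposal is correct and follows essentially the same route the paper sketches: the paper does not prove this result itself (it is cited from Newey, 1990), but the discussion preceding the theorem statement rewrites each submodel's Cramer--Rao bound as $\E[d_\eta d_\eta']$ for the projection $d_\eta$ of $d$ onto the score span, and argues the supremum is attained by projecting onto all of $\mathcal{T}$ --- exactly the two-sided projection argument you formalize. Your treatment of the remaining subtleties (approximating elements of the closure by scores of genuinely regular submodels, and discarding redundant coordinates to keep the information matrix nonsingular) matches how the cited reference completes the argument.
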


\section{Proof of Theorem~\ref{thm:semi-param-efficiency}}

\begin{proof}

We follow the steps outlined in Section (\ref{sec:eff_bound}) to calculate the semi-parametric efficiency bound of the parameter of interest:
\begin{align} \label{eq:alpha_int}
\alpha :=\E[ \langle \theta_{0}(z), \phi(\pi(z),z)\rangle ] 
\end{align} 
Let $ f(y,a \mid z) $ and $f(z)$ denote the conditional distribution of $(y,a)$ given $z$ and the marginal distribution of $z$, respectively. The density of data $(y,a,z)$ is then equal to:
\begin{align*}
    f(y,a,z) = f(y,a \mid z) f(z)
\end{align*}
We consider a regular parametric submodel, parameterized by $\eta$, to calculate the pathwise derivative of $\alpha(\eta)$:
\begin{align*}
    f(y,a,z ; \eta) = f(y,a \mid z; \eta) f(z; \eta)
\end{align*}
The corresponding scores for the parametric submodel is given by:
\begin{align*}
s_{\eta}(y,a,z ; \eta) &=  s_{\eta}(y,a \mid z; \eta) + s_{\eta}(z; \eta)
\end{align*}
where $ s_{\eta}(y,a,z ; \eta) = 2 \dfrac{f_{\eta}(y,a,z ; \eta)}{f(y,a,z ; \eta)}$, and other scores are defined similarly.

Under the parametric submodel $\alpha$ can be written as:
\begin{align} \label{eq:path}
    \alpha(\eta) = \int  \langle \theta (z ; \eta),\phi(\pi(z),z) \rangle f(z ;\eta) dz
\end{align}
The first step in semi-parametric efficiency bound derivation is to show that $ \alpha(\eta) $ is pathwise differentiable, i.e. there exists $ d(y,a,z; \eta_{0}) $ such that
\begin{align*}
\frac{\partial \alpha(\eta)}{\partial \eta} = \E[d(y,a,z ; \eta) S_{\eta}(y,a,z ; \eta)  ]
\end{align*}
Let $\eta_{0}$ denote the true parameter value in the sense that $ f(y,a,z ; \eta_{0})$ corresponds to the density of the data. To show pathwise differentiability, we differentiate Equation (\ref{eq:path}) under the integral sign and evaluate at $\eta = \eta_{0}$:
\begin{align}
    \dfrac{\partial \alpha(\eta_{0})}{\partial \eta} &= \int \left \langle \dfrac{\partial \theta (z ; \eta_{0})}{\partial \eta},\phi(\pi(z),z)   \right \rangle f(z ;\eta_{0}) dz + \int  \langle \theta_{0} (z; \eta_{0}),\phi(\pi(z),z) \rangle \dfrac{\partial f(z ; \eta_{0})}{\partial \eta} dz \\
    & =  \E \left[ \left \langle \dfrac{\partial \theta (z ; \eta_{0})}{\partial \eta},\phi(\pi(z),z)   \right \rangle \right] + \E[  \langle \theta (z; \eta_{0}),\phi(\pi(z),z) \rangle s_{\eta}(z ; \eta_{0})  ] \label{eq:pathwise}
\end{align}

To calculate $\partial \theta(z; \eta_0) / \partial \eta$ inside the expectations we use the first order conditions of the linear projection:
\begin{align*}
\E\left[\, (y - \langle \theta_0(z), \phi(a,z)\rangle) \phi(a,z)\mid z\right] &= 0 \\
 \int (y - \langle {\theta}(z ; \eta_0), \phi(a,z)\rangle) \phi_{i}(a,z) f(y, a \mid z ; \eta_0 ) dy da &= 0
\end{align*}
Taking the derivative under the integral sign and evaluating at $\eta_0$ for all $i$:
\begin{align*}
\E \left [ \left \langle \dfrac{\partial \theta (z ; \eta_{0})}{\partial \eta},\phi(a,z) \phi(a,z)^{T}   \right \rangle \mid z \right] +   \E[ (y - \langle {\theta}(z ; \eta_0), \phi(a,z)\rangle)  \phi(a,z) s_{\eta}(y, a \mid z ,\eta_{0}) \mid z  ] =0
\end{align*}
Solving for   $\partial \theta(z; \eta_0) / \partial \eta$
\begin{align*}
\partial \theta(z; \eta_0) / \partial \eta = \E \left [ \Sigma(z)^{-1} \phi(a,z) (y - \langle {\theta}(z ; \eta_0), \phi(a,z)\rangle)  s_{\eta}(y, a \mid z ;\eta_{0}) \mid z  \right  ]
\end{align*}
Substituting this into Equation (\ref{eq:pathwise}):
\begin{align}  \label{eq:pathwise_der}
 \dfrac{\partial \alpha(\eta_{0})}{\partial \eta} &=  \E \left[ \left \langle  \Sigma_{0}(z)^{-1}_{} \phi(a,z) (y - \langle {\theta_0}(z), \phi(a,z)\rangle)  ,\phi(\pi(z),z) \rangle \right) s_{\eta}(y, a  \mid z ; \eta_{0}) \right]+  \\
 & \quad \quad \E[   \langle \theta_0 (z),\phi(\pi(z),z) \rangle s_{\eta}(z ; \eta_{0}) ]  \nonumber \\ 
&=  \E \left[  \left( \langle  \theta_0 (z) + \Sigma_{0}(z)^{-1} \phi(a,z) (y - \langle {\theta_0}(z), \phi(a,z)\rangle)  ,\phi(\pi(z),z) \rangle  - \alpha(\eta_{0}) \right) \left( s_{\eta}(y, a \mid z, \eta_{0}) + s_{\eta}(z ; \eta_{0})  \right) \right] \nonumber \\  
    &=  \E \left[ d(y,a,z; \eta_{0}) \left( s_{\eta}(y, a \mid  z ; \eta_{0}) + s_{\eta}(z ; \eta_{0})  \right) \right] \nonumber \\  
    &=  \E \left[ d(y,a,z; \eta_{0}) \left( s_{\eta}(y, a \mid  z;\eta_{0})  \right) \right] 
\end{align}
The second line follows because:
\begin{align*}
\E[ \ldot{\theta_0(z)}{\phi(\pi(z),z)}  s_{\eta}(y, a \mid z, \eta_{0}) ] = \E[ \ldot{\theta_0(z)}{\phi(\pi(z),z)} \E[ s_{\eta}(y, a \mid z, \eta_{0}) \mid z ]] = 0
\end{align*}
\begin{align*}
\E[\alpha(\eta_{0})  s_{\eta}(z ; \eta_{0}) ] =  \alpha(\eta_{0})\E[  s_{\eta}(z ; \eta_{0}) ] = 0
\end{align*}
and 
\begin{align*}
\E [ \langle \Sigma_{0}(z)^{-1} \phi(a,z) (y - \langle {\theta_0}(z), \phi(a,z)\rangle)  ,\phi(\pi(z),z) \rangle   s_{\eta}(z ; \eta_{0}) ]  =0
\end{align*}
Subtracting $\alpha(\eta_{0})$ in the second line makes the pathwise derivative mean zero, which will prove useful later when projecting $ d(y,a,z; \eta_{0})$ on the tangent set.

Since Equation (\ref{eq:pathwise_der}) satisfies the condition given in the defition of pathwise differentiability, the pathwise derivative of $\alpha(\eta)$ is:
\begin{align*} 
d(y,a,z; \eta_{0}) = \left( \langle  \theta_0 (z) + \Sigma_{0}(z)^{-1} \phi(a,z) (y - \langle {\theta_0}(z), \phi(a,z)\rangle)  ,\phi(\pi(z),z) \rangle  - \alpha \right)
\end{align*}
The semi-parametric efficiency bound for $\alpha$ is the variance of the projection of $d(y,a,z ; \eta_{0})$ onto the tangent space defined as the closed linear span of the scores:
\begin{align*}
\mathcal{T} = \{ s(y,a \mid z) + s(z) \}    
\end{align*}
Note that the joint distribution is unrestricted so the only restrictions on the score functions are $E[s(y,x \mid z) \mid z ] =0 $ and \ $ E[s(z)] = 0 $ and they are smooth.

Next, we show that the pathwise derivative is already in the tangent set $ d(y,a,z; \eta_{0}) \in \mathcal{T}$. To see this we can write $d(y,a,z;\eta_{0}) $ as the sum of two functions:
\begin{align*} 
d(y,a,z; \eta_{0}) &= \big( \Sigma_{0}(z)^{-1} \phi(a,z) (y - \langle {\theta_0}(z), \phi(a,z)\rangle)  ,\phi(\pi(z),z) \rangle \big) + \big( \langle \theta_0(z), \phi(\pi(z),z) \rangle - \alpha \big) 
\end{align*}
The first component is mean independent of $z$:
\begin{align*}
\E[  \langle  \Sigma_{0}(z)^{-1} \phi(a,z) (y - \langle {\theta_0}(z), \phi(a,z)\rangle)  ,\phi(\pi(z),z) \rangle  \mid z] &=0
\end{align*}
The second component is function of only $z$ and has zero mean:
\begin{align*}
\E[ \langle \theta_0(z), \phi(\pi(z),z) \rangle - \alpha] & = 0 
\end{align*}
Therefore, the pathwise derivative equals the sum of two functions that satisfy the restrictions on score functions in the tangent set, namely, $E[s(y,x \mid z) \mid z ] =0 $ and \ $ E[s(z)] = 0 $. From this, we conclude that $d(y,a,z; \eta_{0})$ is in the tangent set; so the projection of $ d(y,a,z; \eta_{0})$ onto  $\mathcal{T}$ is equal to itself.

Therefore, the efficiency bound for $\alpha$ is:
\begin{align*}
V_{eff}(\alpha) &= Var(d(y,a,z;\eta_0)) \\
&=    Var(v_{DR}(y,a,z; \pi))
\end{align*}
Therefore, the doubly robust estimator, $v_{DR}(y,a,z; \pi)$, achieves the semi-parametric efficiency bound. This result extends to the difference of value functions by linearity of pathwise derivative.

To investigate the semi-parametric efficiency bound under the correct specification we use a result from \cite{chamberlain1992efficiency} who shows that under the correct specification the efficiency bound is:
\begin{align*}
V_{eff}^{c}(\alpha) &= Var \big( \langle  \theta_0 (z), \phi(\pi(z),z) \rangle \big) \\
&  \quad \quad + \E \big[ \phi(\pi(z),z) \E[ \phi(a,z)  \E[ \epsilon^2 \mid a,z ]^{-1}  \phi(a,z)^\prime \mid z ]^{-1} \phi(\pi(z),z)^{T}  \big]
\end{align*}
where $\epsilon =  (y - \langle {\theta_0}(z), \phi(a,z)\rangle) $ is defined as residuals.

Under the homoskedastivity assumption, $\E[\epsilon^2 \mid a,z ] = \sigma^{2}$, this efficiency bound becomes:
\begin{align*}
V_{eff}^{c}(\alpha) &= Var \big( \langle  \theta_0 (z), \phi(\pi(z),z) \rangle \big) + \\ 
&  \quad \quad \sigma^2 \E \big[ \phi(\pi(z),z) \E[ \phi(a,z) \phi(a,z)^\prime \mid z ]^{-1} \phi(\pi(z),z)^{T}  \big] \\
&= Var \big( \langle  \theta_0 (z), \phi(\pi(z),z) \rangle \big) + \sigma^2 \E \big[ \phi(\pi(z),z)  \Sigma_{0}(z)^{-1} \phi(\pi(z),z)^{T}  \big] 
\end{align*}
which is equal to the variance of the doubly robust estimator:
\begin{align*}
V_{eff}(\alpha) &= Var \big( \langle  \theta_0 (z), \phi(\pi(z),z) \rangle \big) +  \\
  &  \quad \quad \E \big[ \phi(\pi(z),z) \E[  \Sigma_{0}(z)^{-1} \phi(a,z)  \epsilon^2 \phi(a,z)^\prime \Sigma(z)^{-1} \mid z ] \phi(\pi(z),z)^{T}  \big] \\
&= Var \big( \langle  \theta_0 (z), \phi(\pi(z),z) \rangle \big) +  \\
&  \quad \quad  \sigma^2 \E \big[ \phi(\pi(z),z)  \Sigma_{0}(z)^{-1}  \E[  \phi(a,z) \phi(a,z)^\prime \mid z ] \Sigma(z)^{-1}  \phi(\pi(z),z)^{T}  \big] \\
&= Var \big( \langle  \theta_0 (z), \phi(\pi(z),z) \rangle \big) +  \\
&  \quad \quad \sigma^2 \E \big[ \phi(\pi(z),z)  \Sigma_{0}(z)^{-1}\Sigma_{0}(z)^{-1}\Sigma_{0}(z)^{-1}  \phi(\pi(z),z)^{T}  \big] \\
&= Var \big( \langle  \theta_0 (z), \phi(\pi(z),z) \rangle \big) + \sigma^2\E \big[ \phi(\pi(z),z)  \Sigma_{0}(z)^{-1} \phi(\pi(z),z)^{T}  \big] \\
&= Var(v_{DR}(y,a,z; \pi)) 
\end{align*}
\end{proof}

\section{Double Robustness Property of Policy Estimator}\label{app:dr}

\begin{theorem}[Double Robustness]\label{thm:double-robustness}
$V_{DR}(\pi)$ is an unbiased estimate of $V_0(\pi(z),z)$ if for all $z$, either $\E_{S_1\sim D^{n/2}}[\hat{\theta}(z)]=\theta_0(z)$ or $\E_{S_1\sim D^{n/2}}[\hat{\Sigma}(z)^{-1}] =\Sigma_{0}(z)^{-1}$, where expectation is taken over the randomness of the nuisance estimation sample $S_1$.
\end{theorem}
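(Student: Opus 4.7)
The plan is to condition on the nuisance‐estimation sample $S_1$ and on the context $z$, reduce the computation to an algebraic identity for $\theta_{DR}$ in expectation, and then handle the two cases of the disjunction separately. By linearity of the inner product, it suffices to show that for every fixed $z$, $\E[\theta_{DR}(y,a,z) \mid z] = \theta_0(z)$; then the claim follows by taking inner product with $\phi(\pi(z),z)$ and averaging over $z$.

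First I would exploit that the evaluation sample $(y,a,z)$ is independent of $S_1$, and iterate expectations in the order $\E_{S_1}\,\E_{a\mid z}\,\E_{y\mid a,z}$ after conditioning on $z$. The key algebraic step is to split the residual as
\begin{equation*}
y - \langle \hat{\theta}(z), \phi(a,z)\rangle \;=\; \bigl(y - V_0(a,z)\bigr) \;+\; \langle \theta_0(z) - \hat{\theta}(z),\, \phi(a,z)\rangle,
\end{equation*}
using the correct specification $V_0(a,z)=\langle \theta_0(z),\phi(a,z)\rangle$. Taking $\E_{y\mid a,z}$ kills the first summand since $\E[y\mid a,z]=V_0(a,z)$; taking $\E_{a\mid z}$ of the second summand contracts $\phi(a,z)\phi(a,z)^T$ into $\Sigma_0(z)$. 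This collapses the conditional expectation to
\begin{equation*}
\E\bigl[\theta_{DR}(y,a,z)\,\bigm|\, z, S_1\bigr] \;=\; \hat{\theta}(z) + \hat{\Sigma}(z)^{-1}\,\Sigma_0(z)\,\bigl(\theta_0(z) - \hat{\theta}(z)\bigr).
\end{equation*}

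Finally I would take expectation over $S_1$ and dispose of each case. In Case~1, when $\E_{S_1}[\hat{\theta}(z)]=\theta_0(z)$, the factor $\theta_0(z)-\hat{\theta}(z)$ has mean zero, and assuming $\hat{\theta}$ and $\hat{\Sigma}^{-1}$ are estimated on independent subsamples (cross-fitting, as in \cite{chernozhukov2018double}, which the paper adopts), the product expectation factors and yields $\E_{S_1}[\hat{\theta}(z)] = \theta_0(z)$. In Case~2, when $\E_{S_1}[\hat{\Sigma}(z)^{-1}]=\Sigma_0(z)^{-1}$, the same factorization gives $\Sigma_0(z)^{-1}\Sigma_0(z) = \eye$, so the expression collapses to $\E_{S_1}[\hat{\theta}(z)] + \theta_0(z) - \E_{S_1}[\hat{\theta}(z)] = \theta_0(z)$.

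The main obstacle I foresee is justifying the factorization $\E_{S_1}[\hat{\Sigma}(z)^{-1}\,(\theta_0(z)-\hat{\theta}(z))] = \E_{S_1}[\hat{\Sigma}(z)^{-1}]\,(\theta_0(z)-\E_{S_1}[\hat{\theta}(z)])$, since in general $\hat{\theta}$ and $\hat{\Sigma}$ are computed from the same data and need not be independent. I would resolve this by either (i) assuming an additional split of $S_1$ into two independent halves, one used to fit each nuisance (consistent with the cross-fitting perspective in Algorithm~\ref{alg:implementation}), or (ii) stating the result in the stronger pointwise sense ($\hat{\theta}(z)=\theta_0(z)$ or $\hat{\Sigma}(z)^{-1}=\Sigma_0(z)^{-1}$ almost surely), under which the conditional identity above already gives $\theta_0(z)$ without any factorization. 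Either way, the structural fact driving the result is the mixed bias form $\hat{\Sigma}^{-1}\Sigma_0(\theta_0-\hat{\theta})$, whose vanishing is precisely the Neyman orthogonality statement established in Lemma~\ref{lem:univ-orthogonality}.
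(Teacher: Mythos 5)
Your proposal follows essentially the same route as the paper's proof in Appendix~\ref{app:dr}: both reduce the claim to the conditional identity $\E[\theta_{DR}(y,a,z)\mid z] = \hat{\theta}(z) + \hat{\Sigma}(z)^{-1}\Sigma_0(z)\left(\theta_0(z)-\hat{\theta}(z)\right)$ and observe that the resulting mixed bias $\left(\hat{\Sigma}(z)^{-1}\Sigma_0(z)-\eye\right)\left(\theta_0(z)-\hat{\theta}(z)\right)$ vanishes under either hypothesis. Two minor remarks: the paper eliminates the residual term via the projection first-order condition $\E[(V_0(a,z)-\ldot{\theta_0(z)}{\phi(a,z)})\phi(a,z)\mid z]=0$ rather than exact specification, so its argument also covers the misspecified case; and the factorization issue you flag is genuine --- the paper simply replaces $\hat{\theta},\hat{\Sigma}^{-1}$ by their $S_1$-expectations, citing sample splitting, which carries the same implicit independence assumption that you make explicit.
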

\begin{proof}
Let $\bar{\theta}(z) = \E_{S_1\sim D^{n/2}}[\hat{\theta}(z)]$ and $\bar{\Sigma}^{-1}(z) = \E_{S_1\sim D^{n/2}}[\hat{\Sigma}(z)^{-1}]$, be the expected value of the estimates at any input $z$, where the expectation is with respect to the randomness on the half-split of $n/2$ samples that were used for training the estimates. Due to sample-splitting and cross-fitting, the expected value of the doubly robust policy estimate can be written as:
\begin{align}
    \E[V_{DR}(\pi)] =~& \E\left[ \left\langle \bar{\theta}(z) + \bar{\Sigma}(z)^{-1}\, \phi(a,z)\, (y - \langle \bar{\theta}(z), \phi(a,z)\rangle), \phi(\pi(z),z) \right\rangle \right]
\end{align}
where the random variables $(y, a, z)$ are a fresh independent draw of the data generating process that generated the observational data. 

Observe that $y$ is an unbiased estimate of $V(a, z)$ conditional on $z$. Moreover, since $\theta_0(z)$ is
the minimizer of the conditional squared loss, taking the first order condition implies:
\begin{align*}
    \E[(V_0(a, z)  &- \ldot{\theta_0(z)}{\phi(a,z)})\phi(a,z)\mid z]=0  \Longleftrightarrow \\  \E[y\, & \phi(a,z)\mid z] = \E[\ldot{\theta_0(z)}{\phi(a,z)})\, \phi(a,z)\mid z]
\end{align*}
Thus we can re-write the expected value of the doubly robust policy estimate as:
\begin{align*}
    \E[V_{DR}(\pi)] =~& \E\left[ \left\langle \bar{\theta}(z) + \bar{\Sigma}(z)^{-1}\, \phi(a,z)\, (Y - \langle \bar{\theta}(z), \phi(a,z)\rangle), \phi(\pi(z),z) \right\rangle \right]\\
    =~& \E\left[ \left\langle \bar{\theta}(z) + \bar{\Sigma}(z)^{-1}\, \phi(a,z)\, \langle \theta_0(z) - \bar{\theta}(z), \phi(a,z)\rangle, \phi(\pi(z),z) \right\rangle \right]\\
    =~& \E\left[ \left\langle \bar{\theta}(z) + \bar{\Sigma}(z)^{-1}\, \phi(a,z) \phi(a,z)^T (\theta_0(z) - \bar{\theta}(z)), \phi(\pi(z),z) \right\rangle \right]\\
    =~& \E\left[ \left\langle \bar{\theta}(z) + \bar{\Sigma}(z)^{-1}\, \E[\phi(a,z) \phi(a,z)^T\mid z] (\theta_0(z) - \bar{\theta}(z)), \phi(\pi(z),z) \right\rangle \right]\\
    =~& \E\left[ \left\langle \bar{\theta}(z) + \bar{\Sigma}(z)^{-1}\, \Sigma_0(z) (\theta_0(z) - \bar{\theta}(z)), \phi(\pi(z),z) \right\rangle \right]\\
    =~& \E\left[ \left\langle \bar{\theta}(z) + \bar{\Sigma}(z)^{-1}\, \Sigma_0(z) (\theta_0(z) - \bar{\theta}(z)), \phi(\pi(z),z) \right\rangle \right]
\end{align*}
Hence we have:
\begin{align*}
\E[V_{DR}(\pi)] - V_0(\pi) =~& \E\left[ \left\langle \left(\bar{\Sigma}(z)^{-1}\, \Sigma_0(z) - \eye\right)\, \left(\theta_0(z) - \bar{\theta}(z)\right), \phi(\pi(z),z) \right\rangle \right]
\end{align*}
The right hand side is zero if either $\bar{\theta}(z)=\theta_0(z)$ or if $\bar{\Sigma}(z)^{-1}=\Sigma_0(z)$.
\end{proof}

\section{Lipschitz Variogram Settings and Binary Treatment}
\label{app:variogram}
 
For simplicity of notation, we let $v(x; \pi)=v_{DR}^0(x;\pi)$ and $\pi_*=\pi_*^0$ throughout this section, as the results are not specific to the doubly robust value function. Suppose that the value function of the policy learning problem has the following self-bounded Lipschitz property:
\begin{equation*}
    \Var(v(x; \pi)) - C\,\Var(v(x; \pi_*)) \leq L\, \left|\E[v(x;\pi)] - \E[v(x;\pi_*)]\right| = L(V(\pi_*) - V(\pi))
\end{equation*}
for some constants $C,L$, i.e. if a policy has value close to the optimal policy, the it does not have much larger variance.
Then we have that:
\begin{align*}
    \sup_{\pi,\pi' \in \Pi_*(\epsilon_n)} \Var(v(x; \pi)-v(x; \pi))\leq~& \sup_{\pi\in \Pi_*(\epsilon_n)} 4\,\Var(v(x; \pi)) \\
    \leq~& 4C\,\Var(v(x; \pi_*)) + 4\,L\, \sup_{\pi\in \Pi_*(\epsilon_n)}(V(\pi) - V(\pi_*))\\
    \leq~& 4C\,\underbrace{\Var(v(x; \pi_*))}_{V_*} + 4\,L\, \epsilon_n
\end{align*}
Thus we get regret rates of the form:
\begin{align*}
V(\pi_*) - V(\pi_2)
=~& O\left(\kappa(2\sqrt{C\,V_*}, \cF_{\Pi}) +  \sqrt{\frac{V_*\, \log(1/\delta)}{n}} + \epsilon_n \frac{1}{\sqrt{n}}\right)\\
=~&  O\left(\kappa(2\sqrt{C\,V_*}, \cF_{\Pi}) +  \sqrt{\frac{V_*\, \log(1/\delta)}{n}}\right)
\end{align*}
since $\epsilon_n=o(1)$.

\begin{example}[Binary Treatment]
In the case of binary treatment, the loss took the form:
\begin{equation}
    v(x;\pi) = \Gamma(z) \cdot (2\pi(z) - 1)
\end{equation}
with $\pi: Z\rightarrow \{0, 1\}$.
In this case observe that the self-bounded property is satisfied since:
\begin{align*}
    \Var(v(x;\pi)) =~& \E[v(x;\pi)^2] - \E[v(x;\pi)]^2\\
    =~& \E[\Gamma(z)^2 (2\pi(z)-1)^2] - V(\pi)^2\\
    =~& \E[\Gamma(z)^2] - V(\pi)^2
\end{align*}
Where the latter property holds since $(2\pi(z)-1)^2=1$ irrespective of $\pi(z)$. Thus the first part in the variance is independent of the policy, which is the crucial special property of the binary treatment case. This leads to the fact that:
\begin{equation}
    \Var(v(x;\pi)) - \Var(v(x;\pi_*)) = V(\pi_*)^2 - V(\pi)^2  \leq 2\, \left| V(\pi) - V(\pi_*)\right|
\end{equation}
Hence, the self-boundedness property holds with $C=1$ and $L=2$. Thus for the binary treatment setting we can achieve a regret rate whose leading term only depends on the semi-parametric efficient variance of the optimal policy.

As a concrete example, consider the case when the class $\cF_{\Pi}$ is a VC-subgraph class of VC dimension $d$, and let $S_n=\E_n[\sup_{\pi} v(x;\pi)^2]= \E_n[\Gamma(z)^2]$. Then Theorem 2.6.7 of \cite{VanDerVaartWe96} shows that: $\cH_2(\epsilon, \cF_{\Pi}, n)=O( d(1+\log(S_n/\epsilon)))$. This implies that
\begin{align*}
\kappa(r, \cF_{\Pi}) = O\left(\int_0^{r} \sqrt{d(1+\log(S_n/\epsilon))}d\epsilon\right) = O\left(r \sqrt{d}\sqrt{1+\log(S/r)}\right).
\end{align*}
Moreover, by Markov's inequality w.p. $1-\delta$, $S_n \leq \E[S_n]/\delta = \E[\sup_{\pi} v(x;\pi)^2]/\delta=\E[\Gamma(z)^2]/\delta:=S/\delta$. Hence, we can conclude that w.p. $1-\delta$:
\begin{align*}
V(\pi_*) - V(\pi_2) = O\left( r \sqrt{1+\log(S/r)}\sqrt{\frac{d}{n}}  + r \sqrt{\frac{\log(1/\delta)}{n}} + \frac{d(1+\log(S/r))}{n} \frac{\log(1/\delta)}{n}\right).
\end{align*}
Combining all the above we get a bound of the form (excluding lower order terms):
\begin{align*}
V(\pi_*) - V(\pi_2) = O\left(  \sqrt{V_*(1+\log(S/V_*))}\sqrt{\frac{d}{n}} +  \sqrt{\frac{V_*\, \log(1/\delta)}{n}}\right).
\end{align*}
which  recovers the result of \cite{athey2017efficient} for binary treatments up to constants.
\end{example}

\section{Doubly Robust Estimators in Pricing Experiment} \label{sec:pricing}

\subsection{Linear Model}

We want to estimate some regression models of $a(z)$ and $b(z)$ in the demand model. For instance, if these fall in some high-dimensional linear function class, we can estimate a regression between demand and the linear function class. Moreover, we need to estimate the covariance matrix, which in this case takes the simple form:
\begin{equation}
    \Sigma_0(z) = \begin{bmatrix} 
1 & \E[p\mid z] \\
\E[p\mid z] & \E[p^2\mid z] 
\end{bmatrix}
\end{equation}
whose inverse takes the form:
\begin{equation}
    \Sigma_0(z)^{-1} = \frac{1}{\Var(p\mid z)}\begin{bmatrix} 
\E[p^2\mid z] & -\E[p\mid z] \\
-\E[p\mid z] &  1
\end{bmatrix}
\end{equation}
If for instance the observational policy was homoskedastic (i.e. the exploration component was independent of the context $z$), then $\Var(p\mid z)$ is a constant $\sigma^2$ independent of $z$. Moreover, we can write:
\begin{equation}
    \E[p^2\mid z] = \sigma^2 + \E[p\mid z]^2
\end{equation}

Thus we only need to estimate the mean treatment policy $g(z)=\E[p\mid z]$ and the variance $\sigma^2$. Then the doubly robust estimate of $a(z)$ takes the form:
\begin{align*}
    a_{DR}(z) 
    =~& \hat{a}(z) + \left(1 + \hat{g}(z)\frac{\hat{g}(z) - p}{\hat{\sigma}^2} \right)\, (d - \hat{a}(z) - \hat{b}(z)\, p) \\
    b_{DR}(z) =~& \hat{b}(z) + \frac{p - \hat{g}(z)}{\hat{\sigma}^2} (d - \hat{a}(z) - \hat{b}(z)\, p)
\end{align*}

\subsection{Quadratic Model}

In the case where we observe the revenue our model becomes quadratic in prices
\begin{align*}
r &= a(z) x -   b (z) x^2 + \epsilon
\end{align*}  
The covariance matrix takes the form:
\begin{equation*}
    \Sigma_0(z) = \begin{bmatrix} 
\E[p^2 \mid z]  & \E[p^3 \mid z] \\
\E[p^3 \mid z] & \E[p^4 \mid z] 
\end{bmatrix}
\end{equation*}
whose inverse is:
\begin{equation*}
    \Sigma_0(z)^{-1} = \frac{1}{\E[p^4 \mid z]\E[p^2 \mid z] - \E[p^3 \mid z]^3 }\begin{bmatrix} 
\E[p^4\mid z] & -\E[p^3 \mid z] \\
-\E[p^3\mid z] & \E[p^2 \mid z]
\end{bmatrix}
\end{equation*}
Let $\mu_k(z)$ denote  $E[p^k \mid z]$. If the observational policy was homoskedastic and none of the central moments of price depends on $z$, using the recursive structure, the nuisance functions in the covariance matrix can be written as
\begin{align*}
    \mu_2(z) &= \mu^c_2 + \mu_1(z)^2 \\
    \mu_3(z) &= \mu^c_3 + 3\mu_2(z)\mu_1(z) - 2\mu_1(z)^3  \\
    \mu_4(z) &= \mu^c_4 + 4\mu_3(z)\mu_1(z)  - 6\mu_1(z)\mu_2(z) + 3 \mu_1(z)^4
\end{align*}
where $\mu^c_k$ denotes the k-th central moment of $p$. Therefore, we only need to estimate the mean treatment policy $\mu_1(z)$ and the central moments $\mu^c_2$, $\mu^c_3$ and $\mu^c_4$. Then, the doubly robust estimate of $a(z)$ and $b(z)$ take the form:
\begin{align*}
    a_{DR}(z) 
    =~& \hat{a}(z) + \left( \dfrac{\mu_4(z) p - \mu_3(z) p^2}{ \mu_4(z) \mu_2(z) - \mu_3(z)^2}\right)\, (d - \hat{a}(z)p - \hat{b}(z)\, p^2) \\
    b_{DR}(z) =~& \hat{b}(z) +  \left( \dfrac{\mu_2(z) p^2 - \mu_3(z) p}{ \mu_4(z) \mu_2(z) - \mu_3(z)^2}\right) (d - \hat{a}(z)p - \hat{b}(z)\, p^2)
\end{align*}

\section{Additional Experiment Results} \label{sec:figures}

\begin{figure}[H]
\centering
\subfloat[Policy Evaluation]{\includegraphics[width=0.9 \textwidth,
height=5cm]{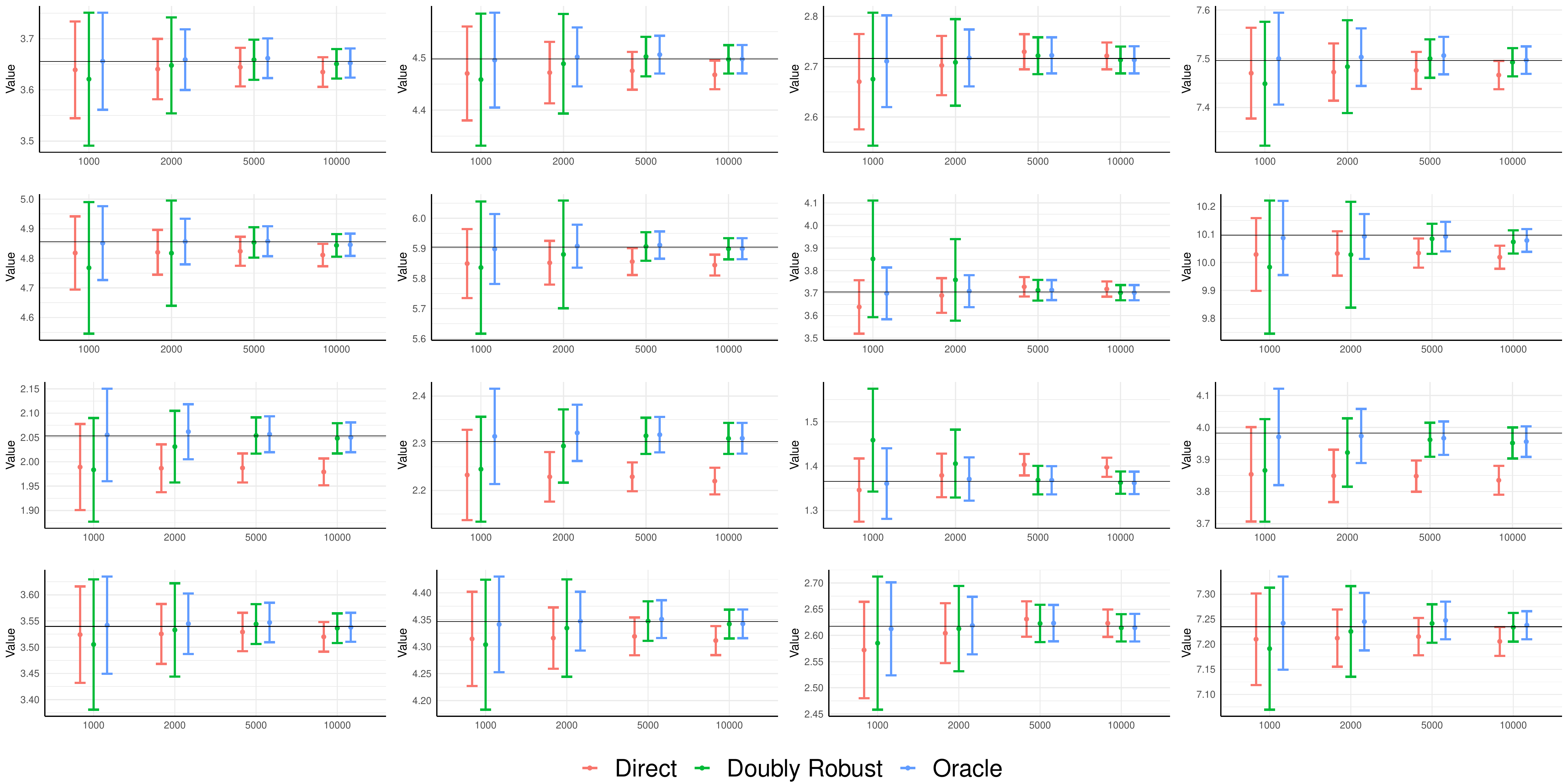}}

\subfloat[Regret]{\includegraphics[width=0.9 \textwidth, height=2.5cm]{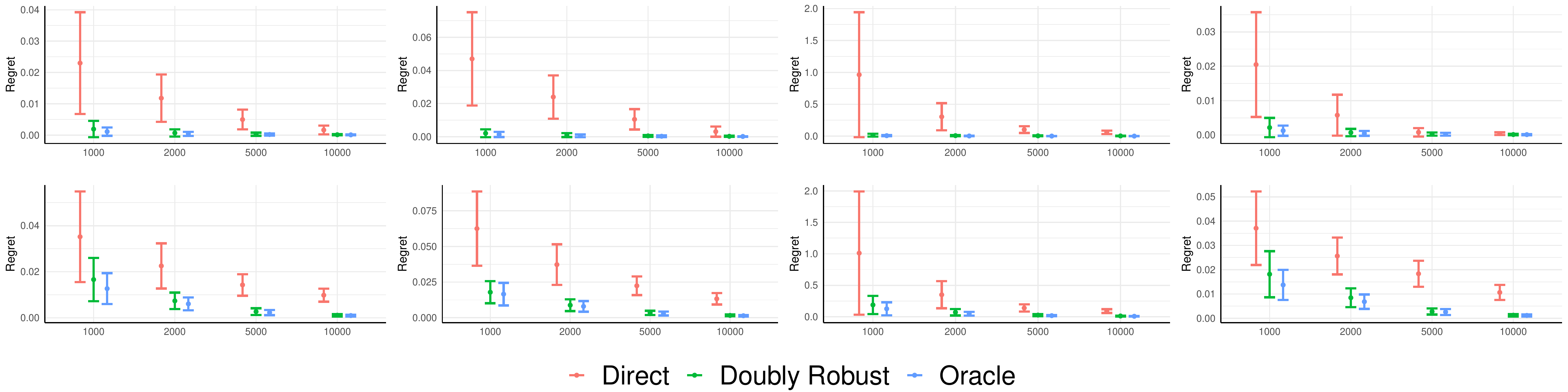}}
\caption{Linear, High Dimensional Regime: (a) Black line shows the true value of the policy, and each line shows the mean and standard deviation of the policy over 100 simulations. (b) each line shows the mean and standard deviation of the value of the corresponding policy over 100 simulations. We omit the results for the inverse propensity score method since they are too large to report together with the other estimates in the high dimensional regime.}
\label{fig:exp2} 
\end{figure}

\begin{figure}[t]
\centering
\subfloat[Policy Evaluation]{\includegraphics[width=0.9 \textwidth,
height=5cm]{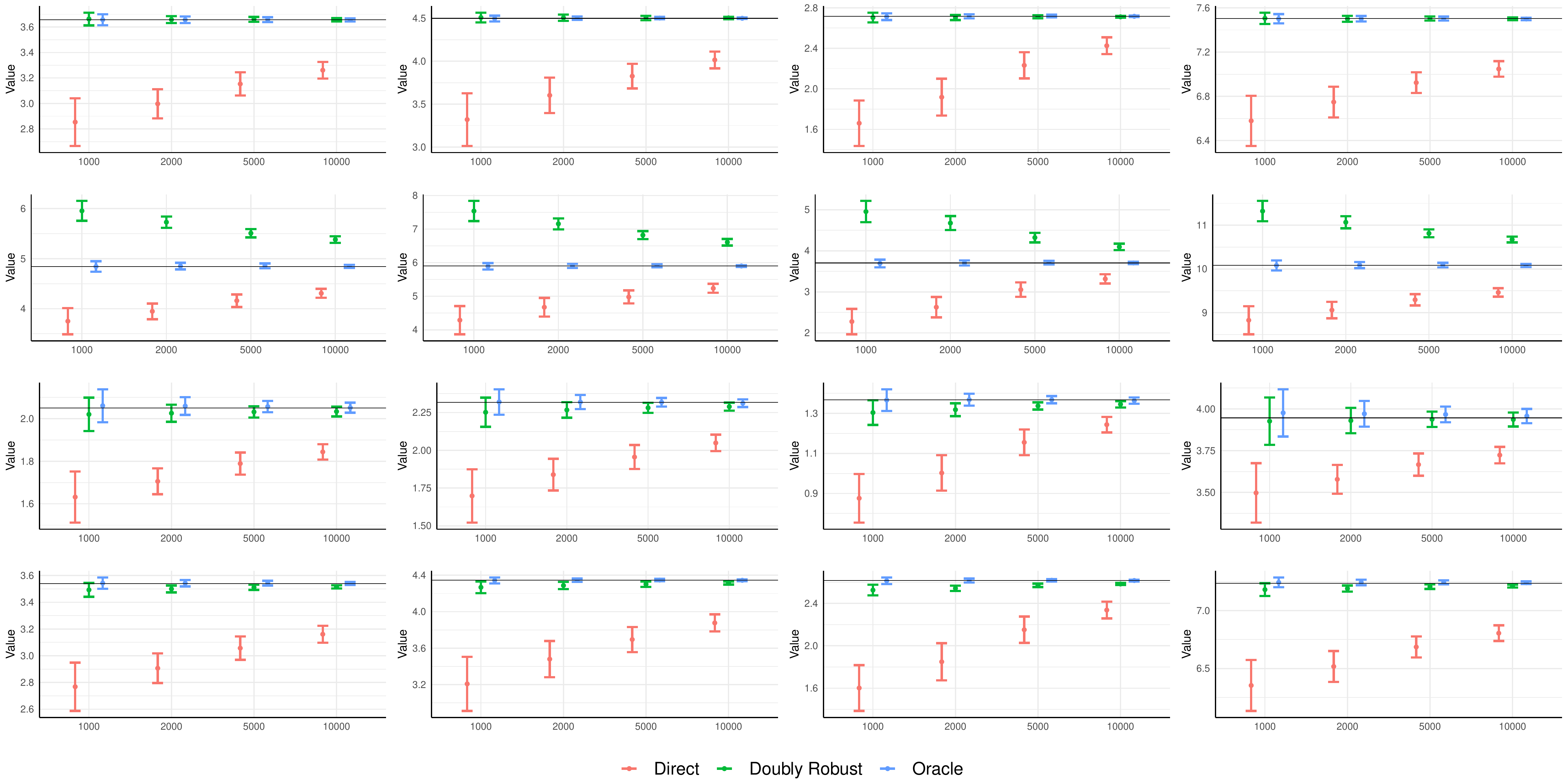}}

\caption{Quadratic, High Dimensional Regime: (a) Black line shows the true value of the policy, and each line shows the mean and standard deviation of the policy over 100 simulations. (b) each line shows the mean and standard deviation of the value of the corresponding policy over 100 simulations. We omit the results for the inverse propensity score method since they are too large to report together with the other estimates in the high dimensional regime.}
\label{fig:exp4} 
\end{figure}

\end{document}